\newtheorem{theorem}{Theorem}
\newtheorem{proposition}{Proposition}
\newtheorem{lemma}{Lemma}
\newtheorem{corollary}{Corollary}
\newcommand{\ii}{\infty}
\newcommand\R{{\ensuremath {\mathbb R} }}
\newcommand\N{{\ensuremath {\mathbb N} }}
\newcommand\1{{\ensuremath {\mathds 1} }}
\renewcommand\phi{\varphi}
\newcommand{\gH}{\mathfrak{H}}
\newcommand{\gS}{\mathfrak{S}}
\newcommand{\wto}{\rightharpoonup}
\newcommand{\cR}{\mathcal{R}}
\newcommand{\cB}{\mathcal{B}}
\newcommand{\cW}{\mathcal{W}}
\newcommand{\cF}{\mathcal{F}}
\newcommand{\cD}{\mathcal{D}}
\newcommand{\cL}{\mathcal{L}}
\renewcommand{\epsilon}{\varepsilon}
\newcommand{\norm}[1]{ \left| \! \left| #1 \right| \! \right| }
\newcommand{\tr}{{\rm Tr}}
\newcommand{\sgn}{{\rm sgn}}
\renewcommand{\ge}{\geqslant}
\renewcommand{\le}{\leqslant}
\renewcommand{\geq}{\geqslant}
\renewcommand{\leq}{\leqslant}
\renewcommand{\hat}{\widehat}
\renewcommand{\tilde}{\widetilde}
\renewcommand\d[1]{{\ensuremath{\,\text{d}#1}}}
\title[Hartree equation for infinitely many particles II]{The Hartree equation for infinitely many particles. II. Dispersion
	and scattering in 2D}
\author[M. Lewin]{Mathieu LEWIN}
\address{CNRS \& Universit\'e de Cergy-Pontoise, Mathematics Department (UMR 8088), F-95000 Cergy-Pontoise, France} 
\email{mathieu.lewin@math.cnrs.fr}
\author[J. Sabin]{Julien SABIN}
\address{Universit\'e de Cergy-Pontoise, Mathematics Department (UMR 8088), F-95000 Cergy-Pontoise, France} 
\email{julien.sabin@u-cergy.fr}
\date{\today}
\begin{document}

\thanks{\copyright\,2013 by the authors. This paper may be reproduced, in its entirety, for non-commercial purposes.}

\begin{abstract}
We consider the nonlinear Hartree equation for an interacting gas containing infinitely many particles and we investigate the large-time stability of the stationary states of the form $f(-\Delta)$, describing an homogeneous Fermi gas. Under suitable assumptions on the interaction potential and on the momentum distribution $f$, we prove that the stationary state is asymptotically stable in dimension 2. More precisely, for any initial datum which is a small perturbation of $f(-\Delta)$ in a Schatten space, the system weakly converges to the stationary state for large times.
\end{abstract}

\maketitle

\tableofcontents

\section{Introduction}

This article is the continuation of the previous work \cite{LewSab-13a} where we considered the nonlinear Hartree equation for infinitely many particles (but the main result of the present article does not rely on~\cite{LewSab-13a}). 

The Hartree equation can be written using the formalism of density matrices as
\begin{equation}\label{eq:Hartree}
 \left\{\begin{array}{rcl}
  i\partial_t\gamma & = & \big[-\Delta+w*\rho_\gamma,\gamma\big], \\
  \gamma(0) & = & \gamma_0.
 \end{array}\right.
\end{equation}
Here $\gamma(t)$ is the one-particle density matrix of the system, which is a bounded non-negative self-adjoint operator on $L^2(\R^d)$ with $d\geq1$, and $\rho_\gamma(t,x)=\gamma(t,x,x)$ is the density of particles in the system at time $t$. On the other hand $w$ is the interaction potential between the particles, which we assume to be smooth and fastly decaying at infinity.

The starting point of \cite{LewSab-13a} was the observation that \eqref{eq:Hartree} has many stationary states. Indeed, if $f\in L^\ii(\R_+,\R)$ is such that
$$\int_{\R^d}|f(|k|^2)|\,dk<+\ii,$$
then the operator 
$$\gamma_f:=f(-\Delta)$$
(the Fourier multiplier by $k\mapsto f(|k|^2)$) is a bounded self-adjoint operator which commutes with $-\Delta$ and whose density
$$\rho_{\gamma_f}(x)=(2\pi)^{-d} \int_{\R^d}f(|k|^2)\,dk,\qquad \forall x\in\R^d,$$ 
is constant. Hence, for $w\in L^1(\R^d)$, $w\ast\rho_{\gamma_f}$ is also constant, and $[w\ast \rho_{\gamma_f},\gamma_f]=0$. Therefore $\gamma(t)\equiv\gamma_f$ is a stationary solution to \eqref{eq:Hartree}. The purpose of~\cite{LewSab-13a} and of this article is to investigate the stability of these stationary states, under ``local perturbations''. We do not necessarily think of small perturbations in norm, but we typically think of $\gamma(0)-\gamma_f$ being compact.

The simplest choice is $f\equiv0$ which corresponds to the vacuum case. We are interested here in the case of $f\neq0$, describing an infinite, homogeneous gas containing infinitely many particles and with positive constant density $\rho_{\gamma_f}>0$.
Four important physical examples are the

\smallskip

\noindent$\bullet$ \emph{Fermi gas at zero temperature}: 
\begin{equation}
\gamma_f=\1(-\Delta\leq \mu),\qquad\mu>0;
\label{eq:Fermi-gas-zero-temp} 
\end{equation}

\medskip

\noindent$\bullet$ \emph{Fermi gas at positive temperature $T>0$}: 
\begin{equation}
\gamma_f=\frac{1}{e^{(-\Delta-\mu)/T}+1},\qquad \mu\in\R;
\label{eq:Fermi-gas-positive-temp} 
\end{equation}

\medskip

\noindent$\bullet$ \emph{Bose gas at positive temperature $T>0$}: 
\begin{equation}
\gamma_f=\frac{1}{e^{(-\Delta-\mu)/T}-1},\qquad \mu<0;
\label{eq:Bose-gas-positive-temp} 
\end{equation}

\medskip

\noindent$\bullet$ \emph{Boltzmann gas at positive temperature $T>0$}: 
\begin{equation}
\gamma_f=e^{(\Delta+\mu)/T},\qquad \mu\in\R.
\label{eq:Boltzmann-gas-positive-temp} 
\end{equation}
In the density matrix formalism, the number of particles in the system is given by $\tr\,\gamma$. It is clear that $\tr\,\gamma_f=+\ii$ in the previous examples since $\gamma_f$ is a translation-invariant (hence non-compact) operator. Because they contain infinitely many particles, these systems also have an infinite energy. In \cite{LewSab-13a}, we proved the existence of global solutions to the equation~\eqref{eq:Hartree} in the defocusing case $\widehat{w}\geq0$, when the initial datum $\gamma_0$ has a finite \emph{relative energy} counted with respect to the stationary states $\gamma_f$ given in~\eqref{eq:Fermi-gas-zero-temp}--\eqref{eq:Boltzmann-gas-positive-temp}, in dimensions $d=1,2,3$. We also proved the orbital stability of $\gamma_f$.

In this work, we are interested in the \emph{asymptotic stability} of $\gamma_f$. As usual for Schr\"odinger equations, we cannot expect strong convergence in norm and we will rather prove that $\gamma(t)\wto\gamma_f$ weakly as $t\to\pm\ii$, if the initial datum $\gamma_0$ is small enough. Physically, this means that a small defect added to the translation-invariant state $\gamma_f$ disappears for large times due to dispersive effects, and the system locally relaxes towards the homogeneous gas. More precisely, we are able to describe the exact behavior of $\gamma(t)$ for large times, by proving that 
$$e^{-it\Delta}\big(\gamma(t)-\gamma_f\big)e^{it\Delta}\underset{t\to\pm \ii}{\longrightarrow} Q_{\pm}$$
strongly in a Schatten space (hence for instance for the operator norm). This nonlinear scattering result means that the perturbation $\gamma(t)-\gamma_f$ of the homogeneous gas evolves for large times as in the case of free particles:
$$\gamma(t)-\gamma_f\underset{t\to\pm \ii}{\simeq} e^{it\Delta}Q_{\pm}e^{-it\Delta}\underset{t\to\pm \ii}{\wto}0.$$

If $f\equiv0$ and $\gamma_0=|u_0\rangle\langle u_0|$ is a rank-one orthogonal projection, then~\eqref{eq:Hartree} reduces to the well-known Hartree equation for one function
\begin{equation}\label{eq:Hartree-u}
\begin{cases}
i\partial_t u=(-\Delta+w*|u|^2)u,\\
u(0)=u_0.
\end{cases}
\end{equation}
There is a large literature about scattering for the nonlinear equation~\eqref{eq:Hartree-u}, see for instance  \cite{GinVel-80,Strauss-81,HayTsu-87,Mochizuki-89,GinVel-00c,Nakanishi-99}. The intuitive picture is that the nonlinear term is negligible for small $u$, since $w*|u|^2u$ is formally of order $3$. It is important to realize that this intuition does not apply in the case $f\neq0$ considered in this paper. Indeed the nonlinear term is not small and it behaves linearly with respect to the small parameter $\gamma-\gamma_f$:
\begin{equation}
\big[w*\rho_\gamma,\gamma\big]=\big[w*\rho_{\gamma-\gamma_f},\gamma\big]\simeq \big[w*\rho_{\gamma-\gamma_f},\gamma_f\big]\neq0. 
\label{eq:linear_intro}
\end{equation}
One of the main purpose of this paper is to rigorously study the linear response of the homogeneous Hartree gas $\gamma_f$ (the last term in~\eqref{eq:linear_intro}), which is a very important object in the physical literature, called the Lindhard function (see~\cite{Lindhard-54} and~\cite[Chap. 4]{GiuVig-05}). For a general $f$, our main result requires that the interaction potential $w$ is small enough, in order to control the linear term. Under the natural assumption that $f$ is strictly decreasing (as it is in the three physical examples~\eqref{eq:Fermi-gas-positive-temp}--\eqref{eq:Boltzmann-gas-positive-temp}), the condition can be weakened in the defocusing case $\widehat{w}\geq0$.

The paper is organized as follows. In the next section we state our main result and make several comments. In Section~\ref{sec:linear-response} we study the linear response in detail, before turning to the higher order terms in the expansion of the wave operator in Section~\ref{sec:higher-order}. Apart from the linear response, our method requires to treat separately the next $d-1$ terms of this expansion, in spacial dimension $d$. Even if all the other estimates are valid in any dimension, in this paper we only deal with the second order in dimension $d=2$.

\section{Main result}

In the whole paper, we denote by $\cB(\gH)$ the space of bounded operators on the Hilbert space $\gH$. The corresponding operator norm is $\|A\|$. We use the notation $\gS^p(\gH)$ for the Schatten space of all the compact operators $A$ on $\gH$ such that $\tr|A|^p<\ii$, with $|A|=\sqrt{A^*A}$, and use the norm $\norm{A}_{\gS^p(\gH)}:=(\tr|A|^p)^{1/p}$. We refer to~\cite{Simon-77} for the properties of Schatten spaces. The spaces $\gS^2(\gH)$ and $\gS^1(\gH)$ correspond to Hilbert-Schmidt and trace-class operators. We often use the shorthand notation $\cB$ and $\gS^p$ when the Hilbert space $\gH$ is clear from the context.

Our main result is the following.

\begin{theorem}[Dispersion and scattering in 2D]\label{thm:main}
Let $f:\R_+\to\R$ be such that 
\begin{equation}
\int_{0}^\ii (1+r^{\frac{k}2})|f^{(k)}(r)|\,dr<\ii\quad\text{for $k=0,...,4$}
\label{eq:derivees_f} 
\end{equation}
and $\gamma_f:=f(-\Delta)$. Denote by $\check{g}$ the Fourier inverse on $\R^2$ of $g(k)=f(|k|^2)$.
Let $w\in W^{1,1}(\R^2)$ be such that 
\begin{equation}
\norm{\check{g}}_{L^1(\R^2)}\norm{\hat w}_{L^\ii(\R^2)}<4\pi
\label{eq:condition_linear_response}
\end{equation}
or, if $f'<0$ a.e.~on $\R_+$, such that  
\begin{equation}
\max\left(\epsilon_g \widehat{w}(0)_+\;,\; \norm{\check{g}}_{L^1(\R^2)}\norm{(\hat w)_-}_{L^\ii(\R^2)}\right)<4\pi
\label{eq:condition_negative_part}
\end{equation}
where $(\hat w)_-$ is the negative part of $\hat w$ and $0\leq \epsilon_g\leq \norm{\check{g}}_{L^1(\R^2)}$ is a constant depending only on $g$ (defined later in Section~\ref{sec:linear-response}).

Then, there exists a constant $\epsilon_0>0$ (depending only on $w$ and $f$) such that, for any $\gamma_0\in\gamma_f+\gS^{4/3}$ with
 $$\norm{\gamma_0-\gamma_f}_{\gS^{4/3}}\le\epsilon_0,$$ there exists a unique $\gamma\in \gamma_f + C^0_t(\R,\gS^{2})$ solution to the Hartree equation \eqref{eq:Hartree} with initial datum $\gamma_0$, such that 
 $$\rho_\gamma-\rho_{\gamma_f}\in L^2_{t,x}(\R\times\R^2).$$
 Furthermore, $\gamma(t)$ scatters around $\gamma_f$ at $t=\pm\ii$, in the sense that there exists $Q_\pm\in\gS^4$ such that
\begin{multline}
\lim_{t\to\pm\ii}\norm{e^{-it\Delta}(\gamma(t)-\gamma_f)e^{it\Delta}-Q_\pm}_{\gS^4}\\
=\lim_{t\to\pm\ii}\norm{\gamma(t)-\gamma_f-e^{it\Delta}Q_\pm e^{-it\Delta}}_{\gS^4}=0. 
\label{eq:scattering}
\end{multline}
\end{theorem}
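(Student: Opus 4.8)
The plan is to recast the Hartree equation as a fixed-point problem for the perturbation $Q(t) := \gamma(t) - \gamma_f$ via a Duhamel formula built around the free evolution. Writing $\gamma(t) = e^{it\Delta}(\gamma_f + Q(t))e^{-it\Delta}$ is not quite right since $\gamma_f$ commutes with $\Delta$; rather one writes the equation for $\gamma(t)-\gamma_f$ and integrates the linear part $[-\Delta,\cdot]$, obtaining
\begin{equation*}
\gamma(t)-\gamma_f = e^{it\Delta}(\gamma_0-\gamma_f)e^{-it\Delta} - i\int_0^t e^{i(t-s)\Delta}\big[w*\rho_{\gamma(s)},\gamma(s)\big]e^{-i(t-s)\Delta}\,ds.
\end{equation*}
The crucial observation, already highlighted in the introduction around~\eqref{eq:linear_intro}, is that $\rho_{\gamma}=\rho_{\gamma_f}+\rho_{\gamma-\gamma_f}$ with $\rho_{\gamma_f}$ constant, so $w*\rho_\gamma = \text{const} + w*\rho_{\gamma-\gamma_f}$ and the commutator only sees $w*\rho_{\gamma-\gamma_f}$. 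Splitting $[w*\rho_{\gamma-\gamma_f},\gamma] = [w*\rho_{\gamma-\gamma_f},\gamma_f] + [w*\rho_{\gamma-\gamma_f},\gamma-\gamma_f]$ isolates the \emph{linear response} term (linear in $Q$ through its density) from a genuinely quadratic remainder. The natural unknown to close the estimates on is in fact the density $\rho_Q = \rho_{\gamma-\gamma_f}$, living in $L^2_{t,x}(\R\times\R^2)$; the operator $Q$ itself is recovered in $C^0_t\gS^4$ (or $\gS^2$) afterwards.

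\textbf{The key steps}, in order, are as follows. (1) Establish the dispersive/Strichartz-type estimates for the free evolution acting on density matrices: if $\gamma_0\in\gS^{4/3}$ then $\rho[e^{it\Delta}\gamma_0 e^{-it\Delta}]\in L^2_{t,x}$, with norm controlled by $\|\gamma_0\|_{\gS^{4/3}}$ — this is the operator-valued Strichartz estimate of Frank–Lewin–Lieb–Seiringer type, dual to a $\gS^2\to\gS^4$ bound, and in $d=2$ the relevant exponent pairing is exactly $\gS^{4/3}\to L^2_{t,x}$. (2) Analyze the linear response operator $Q\mapsto \rho\big[\int_0^t e^{i(t-s)\Delta}[w*\rho_Q(s),\gamma_f]e^{-i(t-s)\Delta}\,ds\big]$; this is a space-time convolution operator whose symbol is the Lindhard function, and the invertibility of $1+(\text{this operator})$ on $L^2_{t,x}$ is precisely what conditions~\eqref{eq:condition_linear_response} and~\eqref{eq:condition_negative_part} guarantee, via the detailed study of Section~\ref{sec:linear-response}: one needs $\|\check g\|_{L^1}\|\hat w\|_{L^\infty} < 4\pi$ so that a Born-series / Neumann-series inversion converges (the constant $4\pi$ being the operator norm of the Lindhard multiplier). (3) Bound the higher-order (quadratic and beyond) terms; in dimension $d$ one must expand the wave operator to order $d$ and treat the terms $2,\dots,d$ by hand, so here in $d=2$ only the second-order term needs a separate argument — estimating $\rho\big[\int e^{i(t-s)\Delta}[w*\rho_Q,Q]e^{-i(t-s)\Delta}\big]$ in $L^2_{t,x}$ in terms of $\|\rho_Q\|_{L^2_{t,x}}$ and $\|Q\|_{C^0_t\gS^2}$, using Hölder in Schatten norms, the Strichartz estimates, and $W^{1,1}$ regularity of $w$ (the derivative is needed to gain the extra integrability/decay closing the $d=2$ estimate). (4) Run a contraction mapping in the space $\{\rho_Q \in L^2_{t,x}\} \times \{Q \in C^0_t\gS^{2}\}$ on a small ball of radius $\sim\epsilon_0$, getting existence and uniqueness. (5) For scattering, show $e^{-it\Delta}Q(t)e^{it\Delta}$ is Cauchy in $\gS^4$ as $t\to\pm\infty$: the Duhamel integrand is integrable at infinity in the appropriate sense precisely because $\rho_Q\in L^2_{t,x}$ and the Strichartz duality converts this into $\gS^4$-convergence of the tail integrals; set $Q_\pm := (\gamma_0-\gamma_f) - i\int_0^{\pm\infty} e^{-is\Delta}[w*\rho_{\gamma(s)},\gamma(s)]e^{is\Delta}\,ds$.

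\textbf{The main obstacle} I expect is step (3) combined with step (2): the quadratic term $[w*\rho_Q, Q]$ mixes the density (controlled only in $L^2_{t,x}$) with the operator $Q$ (controlled in $\gS^2$), and extracting its density in $L^2_{t,x}$ requires a careful interplay of the operator-valued Strichartz estimates with Hölder inequalities in Schatten classes — there is a genuine dimensional threshold here, which is why the theorem is stated only for $d=2$ (the argument as structured would need the next $d-1$ terms controlled separately, and the estimates degrade). Subtler still is that the linear response is \emph{not} small and cannot be absorbed into the contraction as an error; it must be inverted exactly, so the smallness condition on $w$ in~\eqref{eq:condition_linear_response} is structural, not technical, and the refinement~\eqref{eq:condition_negative_part} under $f'<0$ requires exploiting the sign/monotonicity structure of the Lindhard function (only its real part with a definite sign survives), which is the delicate harmonic-analytic heart of Section~\ref{sec:linear-response}.
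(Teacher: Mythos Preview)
Your overall architecture is right: fixed point on the density, invert the linear response as a Fourier multiplier via the Lindhard function, Strichartz in Schatten classes for the free term, and scattering from $L^2_{t,x}$ control of $\rho_Q$. Steps (1), (2) and (5) match the paper almost exactly.

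The gap is in steps (3)--(4). You propose a \emph{coupled} contraction in $(\rho_Q,Q)\in L^2_{t,x}\times C^0_t\gS^2$ and identify the quadratic remainder as $\rho\big[\int e^{i(t-s)\Delta}[w*\rho_Q,Q]e^{-i(t-s)\Delta}\,ds\big]$, to be bounded in $L^2_{t,x}$ by $\|\rho_Q\|_{L^2_{t,x}}\|Q\|_{C^0_t\gS^2}$ via H\"older in Schatten norms plus Strichartz. This does not close: by duality one needs the commutator $[w*\rho_Q(s),Q(s)]$ to lie in (something like) $L^1_s\gS^{4/3}$, but with $w*\rho_Q(s)$ merely a bounded multiplication and $Q(s)\in\gS^2$ you only get $\gS^2$, and the time integral picks up a $\sqrt{t}$ that destroys global control. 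No amount of $W^{1,1}$ regularity on $w$ repairs this Schatten exponent mismatch.

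The paper avoids this by running the fixed point on $\rho_Q$ \emph{alone}, never carrying $Q$ as an unknown. One writes $\gamma(t)=e^{it\Delta}\cW_{w*\rho_Q}(t)(\gamma_f+Q_0)\cW_{w*\rho_Q}(t)^*e^{-it\Delta}$ and expands the wave operator $\cW_V=1+\sum_{n\geq1}\cW_V^{(n)}$ fully, so every term is a polynomial in $V=w*\rho_Q$ hitting either the fixed $Q_0\in\gS^{4/3}$ or $\gamma_f$. All $Q_0$--terms (any order) are then handled uniformly by the Strichartz bounds on $\cW_V^{(n)}$; for the $\gamma_f$--terms, orders $n+m\geq3$ also follow from those bounds, order $n+m=1$ is the linear response, and the genuinely delicate piece is the \emph{second-order term in $\gamma_f$}, namely $\rho\big[e^{it\Delta}(\cW^{(2)}\gamma_f+\cW^{(1)}\gamma_f\cW^{(1)*}+\gamma_f\cW^{(2)*})e^{-it\Delta}\big]$. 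This is not a Schatten--H\"older argument at all: the paper computes its kernel explicitly in Fourier variables and proves a trilinear Hardy--Littlewood--Sobolev inequality with a $|\det(k,\ell)|^{-1/2}$ weight to land in $L^2_{t,x}$. This is where the decay assumptions on $\check g$ and on $\hat w$ (rather than $W^{1,1}$ of $w$) are actually used. Once $\rho_Q$ is found, $Q(t)$ is reconstructed a posteriori and the $\gS^4$ scattering follows from the strong limit of $\cW_V(t)-1$.
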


Before explaining our strategy to prove Theorem~\ref{thm:main}, we make some comments.

First we notice that the gases at positive temperature~\eqref{eq:Fermi-gas-positive-temp},~\eqref{eq:Bose-gas-positive-temp} and~\eqref{eq:Boltzmann-gas-positive-temp} are all covered by the theorem with the condition~\eqref{eq:condition_negative_part}, since the corresponding $f$ is smooth, strictly decreasing and exponentially decaying at infinity. Our result does not cover the Fermi gas at zero temperature~\eqref{eq:Fermi-gas-zero-temp}, however. We show in Section~\ref{sec:linear-response} that its linear response is unbounded and it is a challenging task to better understand its dynamical stability.

The next remark concerns the assumption~\eqref{eq:condition_linear_response} which says that the interactions must be small or, equivalently, that the gas must contain few particles having a small momentum (if $\check{g}\geq0$, then the condition can be written $f(0)\norm{\widehat w}_{L^\ii(\R^2)}<2$ and hence $f(|k|^2)$ must be small for small $k$). Our method does not work without the condition~\eqref{eq:condition_linear_response} if no other information on $w$ and $f$ is provided. However, under the natural additional assumption that $f$ is strictly decreasing, we can replace the condition~\eqref{eq:condition_linear_response} by the weaker condition~\eqref{eq:condition_negative_part}. The latter says that the negative part of $\widehat{w}$ and the value at zero of the positive part should be small (with a better constant for the latter). We will explain later where the condition~\eqref{eq:condition_negative_part} comes from, but we mention already that we are not able to deal with an arbitrary large potential $\widehat{w}$ in a neighborhood of the origin, even in the defocusing case. We also recall that the focusing or defocusing character of our equation is governed by the sign of $\widehat{w}$ and not of $w$, as it is for~\eqref{eq:Hartree-u}. This is seen from the sign of the nonlinear term 
$$\int_{\R^d}\int_{\R^d}w(x-y)\rho_{\gamma-\gamma_f}(x)\rho_{\gamma-\gamma_f}(y)\,dx\,dy=(2\pi)^{d/2}\int_{\R^d}\hat{w}(k)|\hat{\rho_{\gamma-\gamma_f}}(k)|^2\,dk$$
which appears in the relative energy of the system (see~\cite[Eq.~(9)--(10)]{LewSab-13a}).

Let us mention that our results hold for \emph{small} initial data, where the smallness is not only qualitative (meaning that $\gamma_0-\gamma_f\in \gS^{4/3}$ for instance), but also quantitative since we need that $\norm{\gamma_0-\gamma_f}_{\gS^{4/3}}$ be small enough. This is a well-known restriction, coming from our method of proof, based on a fixed point argument. The literature on nonlinear Schrödinger equations suggests that, in order to remove this smallness assumption, one would need some assumption on $w$ like $\widehat{w}\geq0$, as well as some additional (almost) conservation laws \cite{Cazenave}. Our study of the linear response operator however indicates that the situation is involved and more information on the momentum distribution $f$ is certainly also necessary.

We finally note that in our previous article~\cite{LewSab-13a}, we proved the existence of global solutions under the assumption that the initial state $\gamma_0$ has a finite relative entropy with respect to $\gamma_f$ (and for $f$ being one of the physical examples \eqref{eq:Fermi-gas-zero-temp}--\eqref{eq:Boltzmann-gas-positive-temp}). By the Lieb-Thirring inequality (see~\cite{FraLewLieSei-11,FraLewLieSei-12} and~\cite{LewSab-13a}), this implies that $\rho_{\gamma(t)}-\rho_{\gamma_f}\in L^\ii_t(L^2_x)$. By interpolation we therefore get that $\rho_{\gamma(t)}-\rho_{\gamma_f}\in L^p_t(L^2_x)$ for every $2\leq p\leq\ii$. This requires of course that the initial perturbation $\gamma_0-\gamma_f$ be small in $\gS^{4/3}$. Our method does not allow to replace this condition by the fact that $\gamma_0$ has a small relative entropy with respect to $\gamma_f$.

\medskip

We now explain our strategy for proving Theorem~\ref{thm:main}. The idea of the proof relies on a fixed point argument, in the spirit of \cite[Sec. 5]{LewSab-13a}. If we can prove that $\rho_\gamma-\rho_{\gamma_f}\in L^{2}_{t,x}(\R_+\times\R^2)$, then we deduce from \cite{Yajima-87,FraLewLieSei-13} that there exists a family of unitary operators $U_V(t)\in C^0_t(\R_+,\cB)$ on $L^2(\R^2)$ such that 
$$\gamma(t)=U_V(t)\gamma_0U_V(t)^*,$$
for all $t\in\R_+$. We furthermore have
$$U_V(t)=e^{it\Delta}\cW_V(t),$$
where $\cW_V(t)$ is the wave operator. By iterating Duhamel's formula, the latter can be expanded in a series as
\begin{equation}
\cW_V(t)=1+\sum_{n\ge1}\cW_V^{(n)}(t)
\label{eq:expansion_wave_op} 
\end{equation}
with
\begin{multline*}
    \cW_V^{(n)}(t):=(-i)^n\int_0^t\,dt_n\int_0^{t_n}\,dt_{n-1}\cdots\int_0^{t_2}\,dt_1\times\\
    \times e^{-it_n\Delta}V(t_n)e^{i(t_n-t_{n-1})\Delta}\cdots e^{i(t_2-t_1)\Delta}V(t_1)e^{it_1\Delta}.
 \end{multline*}
 The idea is to find a solution to the nonlinear equation
 \begin{equation}\label{eq:rhoQ1}
  \rho_Q(t)=\rho\left[e^{it\Delta}\cW_{w*\rho_Q}(t)(\gamma_f+Q_0)\cW_{w*\rho_Q}(t)^*e^{-it\Delta}\right]-\rho_{\gamma_f},
 \end{equation}
 by a fixed point argument on the variable $\rho_Q\in L^2_{t,x}(\R\times\R^2)$, where $Q:=\gamma-\gamma_f$ and $Q_0=\gamma_0-\gamma_f$.

Inserting the expansion~\eqref{eq:expansion_wave_op} of the wave operator $\cW_V$, the nonlinear equation~\eqref{eq:rhoQ} may be written as 
 \begin{equation}\label{eq:rhoQ}
  \rho_Q=\rho\left[e^{it\Delta}Q_0e^{-it\Delta}\right]-\cL(\rho_Q)+\cR(\rho_Q),
 \end{equation}
where $\cL$ is linear and $\cR(\rho_Q)$ contains higher order terms. The sign convention for $\cL$ is motivated by the stationary case~\cite{FraLewLieSei-12}.
The linear operator $\cL$ can be written 
$$\cL=\cL_1+\cL_2$$
where
$$\cL_1(\rho_Q)=-\rho\left[e^{it\Delta}(\cW^{(1)}_{w*\rho_Q}(t)\gamma_f+\gamma_f\cW^{(1)}_{w*\rho_Q}(t)^*)e^{-it\Delta}\right]$$
and
$$\cL_2(\rho_Q)=-\rho\left[e^{it\Delta}(\cW^{(1)}_{w*\rho_Q}(t)Q_0+Q_0\cW^{(1)}_{w*\rho_Q}(t)^*)e^{-it\Delta}\right].$$
Note that $\cL_2$ depends on $Q_0$ and it can always be controlled by adding suitable assumptions on $Q_0$. On the other hand, the other linear operator $\cL_1$ does not depend on the studied solution, it only depends on the functions $w$ and $f$. 

In Section~\ref{sec:linear-response}, we study the linear operator $\cL_1$ in detail, and we prove that it is a space-time Fourier multiplier of the form $\widehat{w}(k)m_f(\omega,k)$ where $m_f$ is a famous function in the physics literature called the \emph{Lindhard function}~\cite{Lindhard-54,Mihaila-11,GiuVig-05}), which only depends on $f$ and $d$. We particularly investigate when $\cL_1$ is bounded on $L^p_{t,x}(\R\times\R^2)$ and we show it is the case when $w$ and $f$ are sufficiently smooth. For the Fermi sea~\eqref{eq:Fermi-gas-zero-temp}, we prove that $\cL_1$ is unbounded on $L^2_{t,x}$.

The next step is to invert the linear part by rewriting the equation~\eqref{eq:rhoQ} in the form
 \begin{equation}\label{eq:rhoQ2}
\rho_Q=(1+\cL)^{-1}\Big(\rho\left[e^{it\Delta}Q_0e^{-it\Delta}\right]+\cR(\rho_Q)\Big)
 \end{equation}
and to apply a fixed point method. In the time-independent case, a similar technique was used for the Dirac sea in~\cite{HaiLewSer-05a}. In order to be able to invert the Fourier multiplier $\cL_1$, we need that
\begin{equation}
\boxed{\phantom{\int}\min_{(\omega,k)\in\R\times\R^2}\left|\widehat{w}(k)m_f(\omega,k)+1\right|>0.\phantom{\int}}
\label{eq:condition_w_precise} 
\end{equation}
Then $1+\cL=1+\cL_1+\cL_2$ is invertible if $Q_0$ is small enough. In Section~\ref{sec:linear-response} we prove the simple estimate
$$|m_f(\omega,k)|\leq (4\pi)^{-1}\norm{\check{g}}_{L^1(\R^2)}$$
and this leads to our condition~\eqref{eq:condition_linear_response}. If $f$ is strictly decreasing, then we are able to prove that the imaginary part of $m_f(k,\omega)$ is never 0 for $k\neq0$ or $\omega\neq0$. Since $m_f(\omega,k)$ has a fixed sign for $\omega=0$ and $k=0$, everything boils down to investigating the properties of $m_f$ at $(\omega,k)=(0,0)$. At this point $m_f$ will usually not be continuous, and it can take both positive and negative values. We have
$$\limsup_{(\omega,k)\to(0,0)}\Re \,m_f(\omega,k)=(4\pi)^{-1}\norm{\check{g}}_{L^1(\R^2)}$$
and we denote
$$\liminf_{(\omega,k)\to(0,0)}\Re \,m_f(\omega,k):=-(4\pi)^{-1}\epsilon_g,$$
leading to our condition~\eqref{eq:condition_negative_part}. It is well-known in the physics literature that the imaginary part of the Lindhard function plays a crucial role in the dynamics of the homogeneous Fermi gas. In our rigorous analysis it is used to invert the linear response operator outside of the origin. The behavior of $m_f(\omega,k)$ for $(\omega,k)\to(0,0)$ is however involved and $1+\cL_1$ is not invertible if $\widehat{w}(0)>4\pi/\epsilon_g$ or $\widehat{w}(0)<-4\pi/\norm{\check{g}}_{L^1(\R^2)}$.

For the Fermi gas at zero temperature~\eqref{eq:Fermi-gas-zero-temp} we will prove that the minimum in~\eqref{eq:condition_w_precise} is always zero, except when $\widehat{w}$ vanishes sufficiently fast at the origin, this means that $1+\cL_1$ is never invertible. It is an interesting open question to understand the asymptotic stability of the Fermi sea.

Once the linear response $\cL$ has been inverted, it remains to study the zeroth order term $\rho\left[e^{it\Delta}Q_0e^{-it\Delta}\right]$ and the higher order terms contained in $\cR(\rho_Q)$. At this step we use a recent Strichartz estimate in Schatten spaces which is due to Frank, Lieb, Seiringer and the first author. 
 
  \begin{theorem}[Strichartz estimate on wave operator~{\cite[Thm 3]{FraLewLieSei-13}}]\label{thm:est-Wn}
  Let $d\ge1$, $1+d/2\le q<\ii$, and $p$ such that $2/p+d/q=2$. Let also $0<\epsilon<1/p$. Then, there exists $C=C(d,p,\epsilon)>0$ such that for any $V\in L^p_t(\R,L^q_x(\R^d))$ and any $t\in\R$, we have the estimates
  \begin{equation}\label{eq:est-W1}
\left\|\cW^{(1)}_V(t)\right\|_{\gS^{2q}}\le C\|V\|_{L^p_t L^q_x}
  \end{equation}
and
  \begin{equation}\label{eq:est-Wn}
\forall n\geq2,\qquad    \left\|\cW^{(n)}_V(t)\right\|_{\gS^{2\left\lceil\frac{q}{n}\right\rceil}}\le\frac{C^n}{(n!)^{\frac{1}{p}-\epsilon}}\|V\|_{L^p_t L^q_x}^n.
  \end{equation}
 \end{theorem}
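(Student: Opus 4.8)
The plan is to remove the free propagators from Duhamel's formula, identify the linear ($n=1$) term with the dual form of the orthonormal Strichartz inequality of Frank--Lewin--Lieb--Seiringer, and control the higher-order time-ordered integrals by iterating the first-order bound with Schatten--H\"older inequalities, the factorial gain being produced by a Christ--Kiselev decomposition of the time simplex.

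First I would conjugate out the free flow. Setting $\tilde V(s):=e^{-is\Delta}V(s)e^{is\Delta}$ (for each $s$ unitarily equivalent to multiplication by $V(s,\cdot)$), telescoping the propagators gives $\cW^{(n)}_V(t)=(-i)^n\int_{0<t_1<\cdots<t_n<t}\tilde V(t_n)\cdots\tilde V(t_1)\,dt_1\cdots dt_n$. For $n=1$ this equals $-i\int_\R e^{-is\Delta}\big(\1_{(0,t)}(s)V(s)\big)e^{is\Delta}\,ds$, so \eqref{eq:est-W1} amounts to the boundedness of $V\mapsto\int_\R e^{-is\Delta}V(s)e^{is\Delta}\,ds$ from $L^p_tL^q_x$ to $\gS^{2q}$, with constant independent of the truncation. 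By $\tr$-duality with $\gamma\in\gS^{(2q)'}$ this is exactly the orthonormal Strichartz bound
\begin{equation*}
\norm{\rho[e^{it\Delta}\gamma e^{-it\Delta}]}_{L^a_tL^b_x}\lesssim\norm{\gamma}_{\gS^{2b/(b+1)}},\qquad \tfrac2a+\tfrac db=d,
\end{equation*}
which holds for $1\le b<1+2/(d-1)$; under $b=q'$, $a=p'$, the hypotheses $2/p+d/q=2$ and $1+d/2\le q<\ii$ fall strictly inside this range, away from the endpoint, so \eqref{eq:est-W1} follows. (For a self-contained treatment, the orthonormal Strichartz bound itself is obtained by Stein complex interpolation of an analytic family built from the explicit kernel of $e^{it\Delta}$, interpolating the $TT^*$ Strichartz endpoint against a trivial Hilbert--Schmidt estimate.)

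For $n\ge2$ I would iterate the identity $\cW^{(n)}_V(t)=-i\int_0^t\tilde V(s)\,\cW^{(n-1)}_V(s)\,ds$. Schatten--H\"older with $\tfrac1{2q/n}=\tfrac1{2q}+\tfrac1{2q/(n-1)}$, applied with the first-order bound to the factor $\int\tilde V\in\gS^{2q}$ and the inductive bound $\cW^{(n-1)}_V(s)\in\gS^{2q/(n-1)}$, gives $\cW^{(n)}_V(t)\in\gS^{2q/n}\subset\gS^{2\lceil q/n\rceil}$ (the ceiling in the statement being only a way to keep the target a Banach Schatten class). The coefficient $\cW^{(n-1)}_V(s)$ is however not constant, so the gap between the time-ordered integral $\int_0^t\tilde V(s)\cW^{(n-1)}_V(s)\,ds$ and the clean product $\big(\int_0^t\tilde V\big)\big(\int_0^t\cdots\big)$ must be bridged, and moreover iterating a plain Christ--Kiselev bound only yields a geometric constant $C^n$. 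The factor $(n!)^{-(1/p-\epsilon)}$ requires treating the whole simplex $\{t_1<\cdots<t_n<t\}$ at once: dyadically splitting it about the median time rewrites an order-$n$ time-ordered integral as a sum over scales of products of two time-ordered integrals of strictly smaller order; iterating this and summing, the discrepancy between the $L^p_t$ integrability of $V$ and the absence of a time variable in the Schatten target — an $L^\ii_t$ endpoint, which one weakens to $L^{1/\epsilon}_t$ because $L^\ii_t$ is inadmissible — converts the scalar identity $\int_{0<t_1<\cdots<t_n<t}\prod_jg(t_j)=\tfrac1{n!}\big(\int_0^tg\big)^n$ into the operator bound with the prefactor $(n!)^{-(1/p-\epsilon)}$. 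Together with the H\"older bookkeeping this gives \eqref{eq:est-Wn}.

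The main obstacle is precisely this Christ--Kiselev-with-factorial-gain step: one must convert $\int_{\Delta_n}=\tfrac1{n!}(\int)^n$ into a quantitative operator estimate that is uniform in $t\in\R$, that extracts essentially the full $(n!)^{-1/p}$ (not merely a geometric $C^n$), and that remains compatible with the Schatten exponent dropping at each iteration (where passing transiently through quasi-normed ideals $\gS^{2q/n}$ with $2q/n<1$ has to be dealt with). The remaining ingredients — the conjugation, the reduction of the $n=1$ case to the known orthonormal Strichartz inequality, and the H\"older bookkeeping — are routine once this estimate is available.
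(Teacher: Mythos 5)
First, a point of context: the paper does not prove this theorem at all --- it is quoted verbatim from \cite[Thm 3]{FraLewLieSei-13}, so the only ``proof'' the paper offers is the citation together with the remark that \eqref{eq:est-W1} is the dual of the orthonormal Strichartz inequality \eqref{eq:Strichartz}. Your treatment of the $n=1$ case is correct and coincides with that remark: after conjugating out the free flow, $\cW^{(1)}_V(t)$ is the truncated integral $\int e^{-is\Delta}V(s)e^{is\Delta}\,ds$, and the duality bookkeeping $(2q)'=2q'/(q'+1)$ with $q'\le 1+2/d$ places you exactly in the range of \cite[Thm 1]{FraLewLieSei-13}. So far so good.

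For $n\ge2$, however, your proposal has a genuine gap: the one step that produces the factor $(n!)^{-(1/p-\epsilon)}$ --- which is the entire content of \eqref{eq:est-Wn} --- is described, labelled as ``the main obstacle'', and then not carried out. A ``Christ--Kiselev with factorial gain'' lemma of the kind you invoke is not a standard off-the-shelf tool, and your inductive set-up makes it worse: you propose to close the induction through the intermediate claim $\cW^{(n)}_V(t)\in\gS^{2q/n}$, which for large $n$ is a quasi-normed ideal with exponent below $1$; no pairing or H\"older argument of the type available here reaches Schatten exponents below $2$, which is precisely why the theorem only asserts membership in $\gS^{2\lceil q/n\rceil}$. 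The actual proof in \cite{FraLewLieSei-13} (reproduced in spirit in the proof of Lemma~\ref{lemma:higher-order-gamma} of this paper) proceeds differently: one writes the conjugated potentials as $V(t_j,x-2it_j\nabla)$, splits each as a product of two square roots, applies the generalized Kato--Seiler--Simon bound \eqref{eq:gKSS} to adjacent pairs to get Hilbert--Schmidt-type factors with singular weights $|t_j-t_{j+1}|^{-d/(2q')}$, uses Schatten--H\"older, and then bounds the resulting multilinear integral over the ordered simplex $\{0<t_1<\cdots<t_n<t\}$; the $1/n!$ volume of the simplex is the source of the factorial, and the loss $\epsilon$ comes from the integrability of the weights. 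To repair your argument you would either need to prove the factorial-gain Christ--Kiselev lemma you postulate (uniformly in $t$, in the correct Schatten classes), or abandon the induction on $n$ and estimate the full order-$n$ simplex integral at once as above.
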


The estimate~\eqref{eq:est-W1} is the dual version of 
\begin{equation}\label{eq:Strichartz}
 \norm{\rho_{e^{it\Delta}Ae^{-it\Delta}}}_{L^p(\R,L^q(\R^d))}\le C\|A\|_{\gS^{\frac{2q}{q+1}}},
\end{equation}
for any $(p,q)$ such that $2/p+d/q=d$ and $1\le q\le 1+2/d$, see~\cite[Thm. 1]{FraLewLieSei-13}. The estimate~\eqref{eq:Strichartz} is useful to deal with the first order term involving $Q_0$ in~\eqref{eq:rhoQ2}, leading to the natural condition that $Q_0\in \gS^{4/3}$ in dimension $d=2$ with $p=q=2$.
 
In dimension $d$, it seems natural to prove that $\rho_Q\in L^{1+2/d}_{t,x}(\R\times\R^d)$. The estimate~\eqref{eq:est-Wn} turns out to be enough to deal with the terms of order $\geq d+1$ but it does not seem to help for the terms of order $\leq d$, because the wave operators $\cW^{(n)}_V$ with small $n$ belong to a Schatten space with a too large exponent. Apart from the linear response, we are therefore left with $d-1$ terms for which a more detailed computation is necessary. We are not able to do this in any dimension (the number of such terms grows with $d$), but we can deal with the second order term in dimension $d=2$,
 $$\rho\left[e^{it\Delta}(\cW^{(2)}_{w*\rho_Q}(t)\gamma_f+\cW^{(1)}_{w*\rho_Q}(t)\gamma_f\cW^{(1)}_{w*\rho_Q}(t)^*+\gamma_f\cW^{(2)}_{w*\rho_Q}(t)^*)e^{-it\Delta}\right],$$
which then finishes the proof of the theorem in this case. The second-order term is the topic of Section \ref{sec:second-order}. 

Even if our final result only covers the case $d=2$, we have several estimates in any dimension $d\geq2$. With the results of this paper, only the terms of order $2$ to $d$ remain to be studied to obtain a result similar to Theorem~\ref{thm:main} (with $\rho_\gamma-\rho_{\gamma_f}\in L^{1+2/d}_{t,x}(\R\times\R^d)$) in dimensions $d\geq3$.

\section{Linear response theory}\label{sec:linear-response}

\subsection{Computation of the linear response operator}

As we have explained before, we deal here with the linear response $\cL_1$ associated with the homogeneous state $\gamma_f$. The first order in Duhamel's formula is defined by
$$Q_1(t):=-i\int_0^t e^{i(t-t')\Delta}[w*\rho_{Q(t')},\gamma_f]e^{i(t'-t)\Delta}\,dt'.$$
We see that it is a linear expression in $\rho_Q$, and we compute its density as a function of $\rho_Q$.

\begin{proposition}[Uniform bound on $\cL_1$]\label{prop:linear-response}
Let $d\ge1$, $f\in L^\ii(\R_+,\R)$ such that $\int_{\R^d}|f(k^2)|\,dk<+\ii$, and $w\in L^1(\R^d)$. Then, the linear operator $\cL_1$ defined for all $\phi\in\cD(\R_+\times\R^d)$ by
$$\cL_1(\phi)(t):=-\rho\big[Q_1(t)\big]=\rho\left[i\int_0^t e^{i(t-t')\Delta}[w*\phi(t'),\gamma_f]e^{i(t'-t)\Delta}\,dt'\right]$$
is a space-time Fourier multiplier by the kernel $K^{(1)}=\widehat{w}(k)\,m_f(\omega,k)$, where 
\begin{equation}
\boxed{
\left[\cF^{-1}_\omega m_f\right](t,k):=2\1_{t\ge0}\sqrt{2\pi}\sin(t|k|^2)\check{g}(2tk)
}
\end{equation}
(we recall that $g(k):=f(k^2)$ and that $\check{g}$ is its Fourier inverse). This means that for all $\phi\in\cD(\R_+\times\R^d)$, we have
$$\cF_{t,x}\left[\cL_1(\phi)\right](\omega,k)=\hat{w}(k)m_f(\omega,k)\left[\cF_{t,x}\phi\right](\omega,k),\quad\forall(\omega,k)\in\R\times\R^d$$
where $\cF_{t,x}$ is the space-time Fourier transform.
Furthermore, if $\int_0^\ii|x|^{2-d}|\check{g}(x)|\,dx<\ii$, then $m_f\in L^\ii_{\omega,k}(\R\times\R^d)$ and we have the explicit estimates
\begin{equation}
 \|m_f\|_{L^\ii_{\omega,k}}\le \frac{1}{2|\mathbb{S}^{d-1}|}\left(\int_{\R^d}\frac{|\check{g}(x)|}{|x|^{d-2}}\,dx\right)
\label{eq:estim_m_f}
\end{equation}
and
\begin{equation}
 \|\cL_1\|_{L^2_{t,x}\to L^2_{t,x}}\le \frac{\norm{\hat{w}}_{L^\ii}}{2|\mathbb{S}^{d-1}|}\left(\int_{\R^d}\frac{|\check{g}(x)|}{|x|^{d-2}}\,dx\right).
\label{eq:estim_L_1}
\end{equation}
\end{proposition}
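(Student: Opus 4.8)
The plan is a concrete Fourier-side computation: rewrite $\cL_1$ as an explicit space–time convolution operator by passing to momentum variables, read off the symbol, and then bound it by an elementary substitution. Writing $g(k)=f(|k|^2)$, the operator $\gamma_f=f(-\Delta)$ is the Fourier multiplier by $g$, the propagator $e^{is\Delta}$ the multiplier by $e^{-is|k|^2}$, and $w*\phi(t')$ (as a multiplication operator) has momentum kernel $\hat w(k-k')\hat\phi(t',k-k')$ up to a universal constant. Composing, the operator $e^{i(t-t')\Delta}[w*\phi(t'),\gamma_f]e^{i(t'-t)\Delta}$ has momentum kernel proportional to $e^{-i(t-t')(|k|^2-|k'|^2)}\hat w(k-k')\hat\phi(t',k-k')\big(g(k')-g(k)\big)$. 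Using the identity $\widehat{\rho_A}(\xi)=(2\pi)^{-d/2}\int_{\R^d}\widehat A(k'+\xi,k')\,dk'$ (which encodes $\rho_A(x)=A(x,x)$) together with $|k'+\xi|^2-|k'|^2=2k'\cdot\xi+|\xi|^2$, the $\hat w$- and $\hat\phi$-factors pull out of the $k'$-integral and I am left, for each fixed $(t',\xi)$, with the scalar oscillatory integral $\int_{\R^d}e^{-i(t-t')(2k'\cdot\xi+|\xi|^2)}\big(g(k')-g(k'+\xi)\big)\,dk'$, which converges absolutely since $g\in L^1$.

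The next step is to evaluate that integral. Translating the second term by $\xi$ and using $\int_{\R^d}e^{-2isk\cdot\xi}g(k)\,dk=(2\pi)^{d/2}\check g(2s\xi)$ — legitimate because $f$ real-valued makes $g$, hence $\check g$, real and radial — the two phases combine to $e^{-is|\xi|^2}-e^{is|\xi|^2}=-2i\sin(s|\xi|^2)$, so the integral equals $-2i(2\pi)^{d/2}\sin(s|\xi|^2)\check g(2s\xi)$ with $s=t-t'$. Substituting back, the powers of $i$ and $2\pi$ cancel and one gets $\cF_x[\cL_1(\phi)(t)](\xi)=2\int_0^t\hat w(\xi)\hat\phi(t',\xi)\sin((t-t')|\xi|^2)\check g(2(t-t')\xi)\,dt'$, a truncated convolution in time. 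Since both $\phi$ and the kernel $s\mapsto 2\1_{s\ge0}\sin(s|\xi|^2)\check g(2s\xi)$ are supported in $\{s\ge0\}$, the truncation is harmless, this is a genuine space–time convolution, and a time Fourier transform yields $\cF_{t,x}[\cL_1(\phi)]=\hat w(k)\,m_f(\omega,k)\,\cF_{t,x}\phi$ with $\cF^{-1}_\omega m_f(t,k)=2\1_{t\ge0}\sqrt{2\pi}\sin(t|k|^2)\check g(2tk)$, equivalently $m_f(\omega,k)=\int_0^\ii e^{-i\omega t}\,2\sin(t|k|^2)\check g(2tk)\,dt$. For $\phi\in\cD(\R_+\times\R^d)$ all the Fubini and Fourier steps are routine (the $t'$-integral is over a compact set); a priori $m_f$ is only a tempered distribution in $\omega$, and the extra integrability hypothesis is exactly what upgrades it to an $L^\ii$ function.

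For the uniform bounds I would assume $\int_{\R^d}|x|^{2-d}|\check g(x)|\,dx<\ii$, so that the integral defining $m_f$ converges absolutely, and bound $|\sin x|\le|x|$ to get $|m_f(\omega,k)|\le 2\int_0^\ii|\sin(t|k|^2)|\,|\check g(2tk)|\,dt\le 2\int_0^\ii t|k|^2\,|\check g(2tk)|\,dt$. Writing $|\check g(2tk)|=\Gamma(2t|k|)$ for the radial profile $\Gamma$ of $|\check g|$ and substituting $r=2t|k|$ turns the right-hand side into $\tfrac12\int_0^\ii r\,\Gamma(r)\,dr=\frac{1}{2|\mathbb{S}^{d-1}|}\int_{\R^d}\frac{|\check g(x)|}{|x|^{d-2}}\,dx$, which is \eqref{eq:estim_m_f}; being uniform in $(\omega,k)$, this also shows $m_f\in L^\ii_{\omega,k}$. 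Then \eqref{eq:estim_L_1} follows immediately from the multiplier form and Plancherel, $\|\cL_1(\phi)\|_{L^2_{t,x}}=\|\hat w\,m_f\,\cF_{t,x}\phi\|_{L^2_{\omega,k}}\le\|\hat w\|_{L^\ii}\|m_f\|_{L^\ii}\|\phi\|_{L^2_{t,x}}$. I do not expect a genuine conceptual obstacle here: the delicate points are bookkeeping of the $2\pi$-factors and signs in the first two steps, and — in the last step — using $|\sin x|\le|x|$ rather than $|\sin x|\le1$, since the cruder bound is not integrable in $t$ and produces a spurious factor $|k|^{-1}$ that falsely suggests a singularity of $m_f$ at $k=0$, whereas the linear bound cancels it and gives an estimate uniform down to the origin.
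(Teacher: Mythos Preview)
Your proof is correct and follows essentially the same route as the paper: the paper computes $\rho_{Q_1}$ by duality against a test function $W$, while you compute $\widehat{\rho_{Q_1}}$ directly from the momentum kernel, but both reduce to the same oscillatory integral $\int e^{-2isk'\cdot\xi}(g(k')-g(k'+\xi))\,dk'=-2i(2\pi)^{d/2}\sin(s|\xi|^2)\check g(2s\xi)$ and the same bound via $|\sin x|\le|x|$ followed by the radial substitution $r=2t|k|$. The only cosmetic difference is that the paper first substitutes $t\mapsto t/|k|$ and then applies $|\sin x|\le|x|$, whereas you do these in the opposite order.
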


\begin{proof}
Let $\phi\in\cD(\R_+\times\R^d)$. In order to compute $\cL(\phi)$, we use the relation 
$$\int_0^\ii\tr[W(t,x)Q_1(t)]\,dt=\int_0^\ii\int_{\R^d}W(t,x)\rho_{Q_1}(t,x)\,dx\,dt,$$
valid for any function $W\in \cD(\R_+\times\R^d)$.
This leads to
\begin{multline*}
 \int_0^\ii\int_{\R^d}W(t,x)\rho_{Q_1}(t,x)\,dx\,dt=\frac{-i}{(2\pi)^{d}}\int_0^\ii\int_0^t\int_{\R^d}\int_{\R^d} e^{-2i(t-t')k\cdot\ell}\times\\
\times\hat{W}(t,-k)\hat{V}(t',k)(g(\ell-k/2)-g(\ell+k/2)) d\ell\,dk\,dt'\,dt,
\end{multline*}
where $g(k):=f(k^2)$ and $V=w*\phi$. Computing the $\ell$-integral gives
\begin{multline*}
\int_{\R^d}e^{-2i(t-t')k\cdot\ell}(g(\ell-k/2)-g(\ell+k/2))\,d\ell\\=-(2\pi)^{d/2}2i\sin((t-t')|k|^2)\check{g}(2(t-t')k). 
\end{multline*}
Hence, using that $\hat{V}=(2\pi)^{d/2}\hat{w}\hat{\phi}$, we find that
\begin{multline*}
   \int_0^\ii\int_{\R^d}W(t,x)\rho_{Q_1}(t,x)\,dx\,dt\\
   =-2\int_0^\ii\int_0^t\int_{\R^d}\sin((t-t')|k|^2)\check{g}(2(t-t')k)\widehat{w}(k)\hat{W}(t,-k)\hat{\phi}(t',k)\,dk\,dt'\,dt.
\end{multline*}
Since $g$ is radial, then $\check{g}$ is also radial and we have
\begin{eqnarray*}
|m_f(\omega,k)| & \leq & 2\int_0^\ii|\sin(t|k|^2)||\check{g}(2t|k|)|\,dt\\
 & \leq & 2\int_0^\ii\frac{|\sin(t|k|)|}{|k|}|\check{g}(2t)|\,dt\\
 & \le & \frac{1}{2}\int_0^\ii r|\check{g}(r)|\,dr.
\end{eqnarray*}
This ends the proof of the proposition.
\end{proof}

We now make several remarks about the previous result.

First, the physical examples for $g$ are
$$g(k)=\begin{cases}
   \displaystyle \1(|k|^2\le\mu), & \mu>0,\\
   \displaystyle e^{-(|k|^2-\mu)/T}, &T>0,\ \mu\in\R,\\
   \displaystyle \frac{1}{e^{(|k|^2-\mu)/T}+1}, & T>0,\ \mu\in\R,\\
   \displaystyle \frac{1}{e^{(|k|^2-\mu)/T}-1}, & T>0,\ \mu<0.
  \end{cases}
$$
In the last three choices, $g$ is a Schwartz function hence $\check{g}\in L^1(\R^d)$. For the first choice of $g$ (Fermi sea at zero temperature), we have $\check{g}(r)\sim r^{-1}\sin r$, which obviously does not verify $r\check{g}(r)\in L^1(0,+\ii)$. 

Then, we remark that~\eqref{eq:estim_m_f} is optimal without more assumptions on $f$. Indeed, for $\omega=0$ and small $k$ we find
\begin{align*}
 m_f(0,k)&=2\int_0^\ii\sin(t|k|^2)\check{g}(2tk)\,dt\\
&\underset{k\to0}\longrightarrow \frac{1}2\int_0^\ii r\check{g}(r)\,dr=\frac{1}{2|\mathbb{S}^{d-1}|}\int_{\R^d}\frac{\check{g}(x)}{|x|^{d-2}}\,dx.
\end{align*}
We conclude that~\eqref{eq:estim_m_f} is optimal if $\check{g}$ has a constant sign (for instance $f$ is decreasing, as in the physical examples~\eqref{eq:Fermi-gas-positive-temp}--\eqref{eq:Boltzmann-gas-positive-temp}). Similarly,~\eqref{eq:estim_L_1} is optimal if both $\check{g}$ and $w$ have a constant sign (then $|\widehat{w}(0)|=\norm{\widehat{w}}_{L^\ii}$). 

In general, the function $m_f$ is complex-valued and it is not an easy task to determine when $\hat{w}(k)m_f(\omega,k)$ stays far from $-1$. Since the stationary linear response is real, $\Im m_f(0,k)\equiv0$, the condition should at least involve the maximum or the minimum of $m_f$ on the set $\{\omega=0\}$, depending on the sign of $\widehat{w}$. Even if the function $m_f$ is bounded on $\R\times\R^d$ by~\eqref{eq:estim_m_f}, it will usually not be continuous at the point $(0,0)$. Under the additional condition that $f$ is strictly decreasing, we are able to prove that 
$$\{\Im m_f(\omega,k)=0\}=\{\omega=0\}\cup\{k=0\}$$
and this can be used to replace the assumption on $\widehat{w}$ by one on $(\widehat{w})_-$ and $\widehat{w}(0)_+$.
In order to explain this, we first compute $m_f$ in the case of a Fermi gas at zero temperature, $f(k^2)=\1(|k|^2\le\mu)$. 

\begin{proposition}[Linear response at zero temperature]
Let $d\ge1$ and $\mu>0$. Then, for the Fermi sea at zero temperature $\gamma_f=\1(-\Delta\le\mu)$, the corresponding Fourier multiplier $m_f(\omega,k):=m^{\rm F}_{d}(\mu,\omega,k)$ of the linear response operator in dimension $d$ is given by
\begin{multline}
m_1^{\rm F}(\mu,\omega,k)=\frac{1}{2\sqrt{2\pi}|k|}\log\left|\frac{(|k|^2+2|k|\sqrt{\mu})^2-\omega^2}{(|k|^2-2|k|\sqrt{\mu})^2-\omega^2}\right|\\ + i\frac{\sqrt{\pi}}{2\sqrt{2}|k|}\bigg\{\1\left(|\omega+|k|^2|\leq 2\sqrt\mu|k|\right)-\1\left(|\omega-|k|^2|\leq 2\sqrt\mu|k|\right)\bigg\}
\label{eq:m_f_1D}
\end{multline}
for $d=1$, by
\begin{multline}
m_2^{\rm F}(\mu,\omega,k)=\frac 14\bigg\{2-\frac{{\rm sgn}(|k|^2+\omega)}{|k|^2}\Big((|k|^2+\omega)^2-4\mu |k|^2\Big)^{\frac12}_+\\ 
- \frac{{\rm sgn}(|k|^2-\omega)}{|k|^2}\Big((|k|^2-\omega)^2-4\mu |k|^2\Big)^{\frac12}_+\bigg\}\\
+i\frac{1}{2|k|^2}\bigg\{\Big((|k|^2+\omega)^2-4\mu |k|^2\Big)^{\frac12}_- - \Big((|k|^2-\omega)^2-4\mu |k|^2\Big)^{\frac12}_-\bigg\}.
\label{eq:m_f_2D}
\end{multline}
for $d=2$, and by
\begin{align}
 m_d^{\rm F}(\mu,\omega,k)&=\frac{|\mathbb{S}^{d-2}|\mu^{\frac{d-1}{2}}}{(2\pi)^{\frac{d-1}2}}\int_{0}^1  m^{\rm F}_1\big(\mu(1-r^2),\omega,k\big) r^{d-2}\,dr,&\text{for $d\geq2$,}\nonumber\\
&=\frac{|\mathbb{S}^{d-3}|\mu^{\frac{d-2}{2}}}{(2\pi)^{\frac{d-2}2}}\int_{0}^1  m^{\rm F}_2\big(\mu(1-r^2),\omega,k\big) r^{d-3}\,dr,&\text{for $d\geq3$.}\label{eq:m_f_3D}
\end{align}
\end{proposition}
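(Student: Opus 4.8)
The plan is to first convert the time-integral description of $m_f$ from Proposition~\ref{prop:linear-response} into an integral over momentum space --- essentially the last step of the proof of that proposition, read backwards --- and then to insert $g=\1(|\cdot|^2\le\mu)$ and carry out the resulting integrals by hand. Taking the Fourier transform in $t$ of the identity $[\cF^{-1}_\omega m_f](t,k)=2\1_{t\ge0}\sqrt{2\pi}\sin(t|k|^2)\check g(2tk)$, writing $\check g(2tk)=(2\pi)^{-d/2}\int_{\R^d}e^{2itk\cdot\ell}g(\ell)\,d\ell$ and $\sin(t|k|^2)$ as a combination of two exponentials, and evaluating the elementary time integrals by means of $\int_0^\ii e^{it\lambda}\,dt=i/(\lambda+i0)$ (legitimate after inserting a regulariser $e^{-\eps t}$ and letting $\eps\downarrow 0$; the displayed identity for $\cF^{-1}_\omega m_f$ requires no decay of $\check g$, only $g\in L^1$), one obtains the Lindhard representation
$$m_f(\omega,k)=\frac{1}{(2\pi)^{d/2}}\int_{\R^d}g(\ell)\left(\frac{1}{2k\cdot\ell+|k|^2-\omega+i0}-\frac{1}{2k\cdot\ell-|k|^2-\omega+i0}\right)d\ell ,$$
with $1/(x+i0)=\mathrm{p.v.}(1/x)-i\pi\delta$ as usual; since for the Fermi sea $m_f$ need not be in $L^\ii$, this and the formulas below hold pointwise almost everywhere.

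I would then treat $d=1$ first. Here the $\ell$-integral runs over $[-\sqrt\mu,\sqrt\mu]$ and is immediate: since $\log(x+i0)=\log|x|+i\pi\1(x<0)$ is an antiderivative of $1/(x+i0)$,
$$\int_{-\sqrt\mu}^{\sqrt\mu}\frac{d\ell}{2|k|\ell+c+i0}=\frac{1}{2|k|}\left(\log(c+2|k|\sqrt\mu+i0)-\log(c-2|k|\sqrt\mu+i0)\right).$$
Applying this with $c=c_1:=|k|^2-\omega$ and $c=c_2:=-|k|^2-\omega$ and combining, the real part becomes $\tfrac{1}{2|k|}\log\left|\tfrac{(c_1+2|k|\sqrt\mu)(c_2-2|k|\sqrt\mu)}{(c_1-2|k|\sqrt\mu)(c_2+2|k|\sqrt\mu)}\right|$, which by the factorisation $(|k|^2\pm 2|k|\sqrt\mu)^2-\omega^2=-(c_1\pm 2|k|\sqrt\mu)(c_2\mp 2|k|\sqrt\mu)$ is exactly the logarithmic term of~\eqref{eq:m_f_1D}; the imaginary part collapses (via $\mathrm{Im}\log(x+i0)=\pi\1(x<0)$) to $\1(|\omega+|k|^2|\le 2\sqrt\mu|k|)-\1(|\omega-|k|^2|\le 2\sqrt\mu|k|)$ with the stated prefactor.

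Next, the reduction formulas~\eqref{eq:m_f_3D} come out of the Lindhard representation by Fubini. Since its integrand depends on $\ell$ only through $k\cdot\ell$, I would align $k$ with the first coordinate axis, split $\ell=(\ell_1,\ell')$ with $\ell'\in\R^{d-1}$, and pass to polar coordinates $\rho=|\ell'|$: the $\ell'$-measure contributes the Jacobian $|\mathbb S^{d-2}|\rho^{d-2}\,d\rho$, the constraint $|\ell|^2\le\mu$ becomes (after exchanging the order of integration) $\ell_1^2\le\mu-\rho^2$, and the inner $\ell_1$-integral equals $\sqrt{2\pi}\,m_1^{\rm F}(\mu-\rho^2,\omega,k)$ by the $d=1$ representation; the substitution $\rho=\sqrt\mu\,r$ then gives the first line of~\eqref{eq:m_f_3D}, the constant being fixed by $\int_{\R^m}h(|x|)\,dx=|\mathbb S^{m-1}|\int_0^\ii h(r)r^{m-1}\,dr$. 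Peeling off $d-2$ dimensions instead, writing the resulting power $(\mu-\ell_1^2-\ell_2^2)^{(d-2)/2}=(d-2)\int_0^{\sqrt\mu}s^{d-3}\1(\ell_1^2+\ell_2^2\le\mu-s^2)\,ds$, and recognising $2\pi\,m_2^{\rm F}(\mu-s^2,\omega,k)$ yields the second line. In particular~\eqref{eq:m_f_2D} follows either from~\eqref{eq:m_f_1D} through the $d=2$ instance of~\eqref{eq:m_f_3D}, or directly: with $g=\1(|\ell|^2\le\mu)$ one integrates out the single perpendicular variable to get $m_2^{\rm F}=\tfrac1\pi\int_{-\sqrt\mu}^{\sqrt\mu}\sqrt{\mu-\ell_1^2}\left(\tfrac{1}{2|k|\ell_1+c_1+i0}-\tfrac{1}{2|k|\ell_1+c_2+i0}\right)d\ell_1$, and evaluates it by the substitution $\ell_1=\sqrt\mu\cos\theta$ together with the table integral $\int_0^\pi\tfrac{\sin^2\theta\,d\theta}{c+i0+a\cos\theta}=\tfrac{\pi}{a^2}\left(c-\sgn(c)\big((c^2-a^2)_+\big)^{1/2}-i\big((c^2-a^2)_-\big)^{1/2}\right)$ with $a=2|k|\sqrt\mu$; the $\big((c^2-a^2)_+\big)^{1/2}$ terms assemble into the real part and the $\big((c^2-a^2)_-\big)^{1/2}$ terms into the imaginary part of~\eqref{eq:m_f_2D}.

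The only genuinely delicate point is carrying the $+i0$ prescription correctly through each step --- through the antiderivative $\log(x+i0)$ in $d=1$, and through the branch of $\sqrt{(c-a+i0)(c+a+i0)}$, equivalently through the residue at the pole of $1/(c+i0+a\cos\theta)$, in $d=2$ --- so as to separate real from imaginary parts; beyond that, the work is bookkeeping of normalisation constants (the $(2\pi)$-powers of the Fourier convention, the surface areas $|\mathbb S^{m-1}|$ together with $|\mathbb S^{m-1}|=m\,|B^m_1|$, the identity $R^m=m\int_0^R s^{m-1}\,ds$) and routine one-variable integration.
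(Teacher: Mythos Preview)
Your proposal is correct, but it takes a genuinely different route from the paper. The paper stays in the time picture as long as possible: it first computes $\check g$ explicitly in each dimension (obtaining $\check g_1(\mu,x)=\sqrt{2/\pi}\,\sin(\sqrt\mu|x|)/|x|$ and the dimensional reduction formulas $\check g_d(\mu,\cdot)=c_d\int_0^1\check g_{1}(\mu(1-r^2),\cdot)r^{d-2}\,dr$, etc., by integrating out transverse momentum variables in $\1(|k|^2\le\mu)$), then inserts this into $[\cF^{-1}_\omega m_f](t,k)=2\1_{t\ge0}\sqrt{2\pi}\sin(t|k|^2)\check g(2tk)$ and evaluates the resulting time-Fourier integral using the closed formula for $\int_0^\infty t^{-1}\sin(at)\sin(bt)e^{-it\omega}\,dt$; for $d=2$ it finishes with the identity $\tfrac1a\int_0^1\log\big|\tfrac{a+2\sqrt{1-r^2}}{a-2\sqrt{1-r^2}}\big|\,dr=\tfrac{\pi}{2}-\tfrac{\pi}{2}(1-4/a^2)_+^{1/2}$.

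You instead pass immediately to the frequency/momentum picture by deriving the Lindhard representation $m_f(\omega,k)=(2\pi)^{-d/2}\int g(\ell)\big[(2k\cdot\ell+|k|^2-\omega+i0)^{-1}-(2k\cdot\ell-|k|^2-\omega+i0)^{-1}\big]\,d\ell$ and compute the $\ell$-integrals directly, carrying the $+i0$ prescription through the antiderivative $\log(x+i0)$ (in $d=1$) and through the contour/table integral for $\int_0^\pi\sin^2\theta\,(c+i0+a\cos\theta)^{-1}\,d\theta$ (in $d=2$); your reduction formulas arise from the same Fubini splitting $\ell=(\ell_1,\ell')$, applied to the Lindhard integral rather than to $\check g$. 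Your approach is the standard physics derivation and is arguably more streamlined for getting the imaginary parts (the $+i0$ rule does the work automatically), while the paper's route has the advantage of producing the explicit $\check g_d$ along the way, which is exactly the object entering the $L^\infty$ bound~\eqref{eq:estim_m_f} and the later estimates. Either way the bookkeeping of constants and branches is the only place to be careful, and your sketch handles that correctly.
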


The formula for $m^{\rm F}_{d}$ is well known in the physics literature (see~\cite{Lindhard-54}, \cite{Mihaila-11} and~\cite[Chap. 4]{GiuVig-05}). It is also possible to derive an explicit expression for $m_{3}^{\rm F}(\mu,\omega,k)$, see~\cite[Chap. 4]{GiuVig-05}. We remark that $m_{d}^{\rm F}(\mu,0,k)$ coincides with the time-independent linear response computed in~\cite[Thm 2.5]{FraLewLieSei-12}. 

From the formulas we see that the real part of $m^{\rm F}_{d}$ can have both signs. It is always positive for $\omega=0$ and it can take negative values for $\omega\neq0$. For instance, in dimension $d=2$, on the curve $\omega=|k|^2+2\sqrt{\mu}|k|$ the imaginary part vanishes and we get
\begin{equation}
m_{2}^{\rm F}\big(\mu,|k|^2+2\sqrt{\mu}|k|,k\big)=\frac 12\left(1-\sqrt{1+\frac{2\sqrt\mu}{|k|}}\right)\underset{k\to0}{\longrightarrow}-\ii.
\label{eq:curve_omega_k}
\end{equation}
In particular, if $\hat{w}(k)/\sqrt{|k|}\to+\ii$ when $k\to0$, then $\widehat{w}(k)m_f(|k|^2+2\sqrt{\mu}|k|,k)\to-\ii$ when $|k|\to 0$. Since on the other hand $\widehat{w}(k)m_f(|k|^2+2\sqrt{\mu}|k|,k)\to0$ when $|k|\to\ii$, we conclude that the function must cross $-1$, and $(1+\cL_1)^{-1}$ is not bounded.

An important feature of $m_d^{\rm F}$ which we are going to use in the positive temperature case, is that the imaginary part $\Im m_{d}^{\rm F}(\mu,\omega,k)$ has a constant sign on $\{\omega>0\}$ and on  $\{\omega<0\}$. Before we discuss this in detail, we provide the proof of the proposition.

\begin{proof}
 First, a calculation shows that the Fourier inverse $ \check{g}_1$ of the radial $g$ in dimension $d=1$ is given by
\begin{equation}
 \check{g}_1(\mu,x)=\sqrt{\frac{2}{\pi}}\frac{\sin(\sqrt{\mu}|x|)}{|x|}.
\label{eq:g_Fourier_inverse_1D}
\end{equation}
In dimension $d\geq2$ we can write 
\begin{align}
\check{g}_d(\mu,|x|)=&\frac{1}{(2\pi)^{d/2}}\int_{\R^d}\1(|k|^2\leq \mu) e^{i k\cdot x}\nonumber\\
=&\frac{1}{(2\pi)^{d/2}}\int_{\R}dk_1\int_{\R^{d-1}}dk_\perp \1(|k_1|^2\leq \mu-|k_\perp|^2) e^{i k_1|x|}\nonumber\\
=&\frac{|\mathbb{S}^{d-2}|\mu^{\frac{d-1}{2}}}{(2\pi)^{d/2}}\int_{\R}dk_1\int_{0}^1  \1\big(|k_1|^2\leq \mu(1-r^2)\big) e^{i k_1|x|}r^{d-2}dr\nonumber\\
=&\frac{|\mathbb{S}^{d-2}|\mu^{\frac{d-1}{2}}}{(2\pi)^{\frac{d-1}2}}\int_{0}^1  \check{g}_1\big(\mu(1-r^2),|x|\big)r^{d-2}\,dr\nonumber\\
=&\frac{2|\mathbb{S}^{d-2}|}{(2\pi)^{d/2}}\frac{\mu^{\frac{d-1}{2}}}{|x|}\int_0^1\sin(\sqrt{\mu}|x|\sqrt{1-r^2})r^{d-2}\,dr.\label{eq:g_Fourier_inverse_2D}
\end{align}
Similarly, we have in dimension $d\geq3$
\begin{equation}
 \check{g}_d(\mu,|x|)=\frac{|\mathbb{S}^{d-3}|\mu^{\frac{d-2}{2}}}{(2\pi)^{\frac{d-2}2}}\int_{0}^1  \check{g}_2\big(\mu(1-r^2),|x|\big)r^{d-3}\,dr.
\label{eq:g_Fourier_inverse_3D}
\end{equation}
Now we can compute the multiplier $m^{\rm F}_d(\mu,\omega,k)$ for $d=1,2$. We start with $d=1$ for which we have
$$[\cF^{-1}_\omega m_{f,1}](t,k)=4\1_{t\ge0}\frac{\sin(t|k|^2)\sin(2\sqrt{\mu}t|k|)}{2t|k|}.$$
There remains to compute the time Fourier transform. We use the formula valid for any $a,b\in\R$,
\begin{multline*}
\int_0^\ii\frac{\sin(at)\sin(bt)}{t}e^{-it\omega}\d{t}=\frac{1}{4}\log\left|\frac{(a+b)^2-\omega^2}{(a-b)^2-\omega^2}\right|\\
+i\frac{\pi}{8}\left(\sgn(a-b-\omega)-\sgn(a+b-\omega)+\sgn(a+b+\omega)-\sgn(a-b+\omega)\right), 
\end{multline*}
and obtain~\eqref{eq:m_f_1D}. To provide the more explicit expression in dimension 2, we use this time the formula 
 $$\forall a\in\R,\qquad \frac{1}{a}\int_0^1\log\frac{|a+2\sqrt{1-r^2}|}{|a-2\sqrt{1-r^2}|}\d{r}=\frac{\pi}{2}-\frac{\pi}{2}\left(1-\frac{4}{a^2}\right)_+^{1/2},$$
which leads to the claimed form~\eqref{eq:m_f_2D} of $m_{2}^{\rm F}(\mu,\omega,k)$.
\end{proof}

Now we will use the imaginary part of $m_d^{\rm F}$ to show that $1+\cL_1$ is invertible with bounded inverse when $\widehat{w}\geq0$ with $\widehat{w}(0)$ not too large, and when $f$ is strictly decreasing.

\begin{corollary}[$1+\cL_1$ is always invertible in the defocusing case]\label{cor:linear-response_defocusing}
Let $d\ge1$ and $f\in L^\ii(\R_+,\R)$ such that $\int_0^\ii (r^{d/2-1}|f(r)|+|f'(r)|)\,dr<\ii$ and $f'(r)<0$ for all $r>0$. Assume furthermore that 
$\int_{\R^d}|x|^{2-d} |\check{g}(x)|\,dx<\ii$ with $g(k)=f(|k|^2)$. If $w\in L^1(\R^d)$ is an even function such that 
\begin{equation}
 \norm{(\hat{w})_-}_{L^\ii}\left(\int_{\R^d}\frac{|\check{g}(x)|}{|x|^{d-2}}\,dx\right)<2|\mathbb{S}^{d-1}|,
\label{eq:condition_negative_part_bis}
\end{equation}
and such that
\begin{equation}
\epsilon_g \hat{w}(0)_+<2|\mathbb{S}^{d-1}|,\quad\text{where}\quad \epsilon_g:=-\liminf_{(\omega,k)\to(0,0)}\frac{\Re m_f(\omega,k)}{2|\mathbb{S}^{d-1}|},
\label{eq:condition_negative_part_bis2}
\end{equation}
then we have
$$\min_{(\omega,k)\in\R\times\R^d}|\widehat{w}(k)m_f(\omega,k)+1|>0$$
and $(1+\cL_1)$ is invertible on $L^2_{t,x}(\R\times\R^d)$ with bounded inverse.
\end{corollary}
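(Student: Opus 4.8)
The plan is to exploit that, by Proposition~\ref{prop:linear-response}, $1+\cL_1$ acts on the Fourier side as multiplication by the symbol $M(\omega,k):=1+\widehat{w}(k)\,m_f(\omega,k)$, which is a bounded function of $(\omega,k)$ since $\widehat{w}\in L^\infty$ and $m_f\in L^\infty$ by~\eqref{eq:estim_m_f}. Hence $1+\cL_1$ is bounded on $L^2_{t,x}(\R\times\R^d)$, and it is invertible with bounded inverse — namely, the multiplier by $1/M$ — exactly when the essential infimum of $|M|$ is positive. Because the analysis of Section~\ref{sec:linear-response} (through the closed-form expression for $m_f$) shows that $(\omega,k)\mapsto\widehat{w}(k)\,m_f(\omega,k)$ is continuous on the open full-measure set $(\R\times\R^d)\setminus\{(0,0)\}$, this essential infimum coincides with $\inf_{(\omega,k)\neq(0,0)}|M|$; and since $M(\omega,0)=1$ (the kernel $\cF^{-1}_\omega m_f$ of Proposition~\ref{prop:linear-response} vanishes identically for $k=0$), everything reduces to $\inf_{\R\times\R^d}|M|>0$. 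To begin, I would dispose of the region at infinity: $\widehat{w}\in C_0(\R^d)$ by Riemann--Lebesgue, and $m_f(\omega,k)\to0$ as $|\omega|/|k|\to\infty$ by Section~\ref{sec:linear-response}, so $\widehat{w}(k)\,m_f(\omega,k)\to0$ as $|(\omega,k)|\to\infty$ and $|M|\ge1/2$ outside some ball $B_{R_0}$. It then suffices to prove (a) $M$ never vanishes on $(\R\times\R^d)\setminus\{(0,0)\}$, and (b) $\liminf_{(\omega,k)\to(0,0)}|M(\omega,k)|>0$; indeed, under (a) the continuous positive function $|M|$ is bounded below by compactness on any annulus $\{\rho\le|(\omega,k)|\le R_0\}$, while (b) handles the remaining neighbourhood of the origin.

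For (a) I would argue by cases, writing $m_+:=\tfrac{1}{2|\mathbb{S}^{d-1}|}\int_{\R^d}|\check{g}(x)|\,|x|^{2-d}\,dx\ \ge\ \norm{m_f}_{L^\infty}$. If $k=0$, then $M=1$. If $\omega=0$ and $k\neq0$, then $m_f(0,k)$ is real and, because $f$ is strictly decreasing, $0\le m_f(0,k)\le m_+$ (nonnegativity of the static linear response, cf.~\cite[Thm~2.5]{FraLewLieSei-12} and Section~\ref{sec:linear-response}); hence $M\ge1$ if $\widehat{w}(k)\ge0$, while $M=1-(\widehat{w})_-(k)\,m_f(0,k)\ge1-\norm{(\widehat{w})_-}_{L^\infty}m_+>0$ if $\widehat{w}(k)<0$, the last inequality being precisely~\eqref{eq:condition_negative_part_bis}. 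Finally, if $\omega\neq0$ and $k\neq0$, then $\Im m_f(\omega,k)\neq0$ — this is the step that uses $f'<0$, established in Section~\ref{sec:linear-response} — so a vanishing of $M$ would force $\Im M=\widehat{w}(k)\,\Im m_f(\omega,k)=0$, hence $\widehat{w}(k)=0$ and $M=1$, a contradiction.

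Step (b) is the heart of the matter; there I would control only the real part. As $(\omega,k)\to(0,0)$ one has $\widehat{w}(k)\to\widehat{w}(0)$, whence $(\widehat{w})_\pm(k)\to\widehat{w}(0)_\pm$ by continuity of $x\mapsto x_\pm$, and $\Re m_f(\omega,k)\le m_+$, while by the very definition of $\epsilon_g$ in~\eqref{eq:condition_negative_part_bis2} one has $\Re m_f(\omega,k)\ge-\tfrac{\epsilon_g}{2|\mathbb{S}^{d-1}|}-o(1)$. Splitting according to the sign of $\Re m_f(\omega,k)$ — in the case $\ge0$, $\widehat{w}(k)\,\Re m_f(\omega,k)\ge-(\widehat{w})_-(k)\,m_+\ge-\norm{(\widehat{w})_-}_{L^\infty}m_+$; in the case $<0$, $\widehat{w}(k)\,\Re m_f(\omega,k)\ge(\widehat{w})_+(k)\,\Re m_f(\omega,k)\ge-\bigl(\widehat{w}(0)_++o(1)\bigr)\bigl(\tfrac{\epsilon_g}{2|\mathbb{S}^{d-1}|}+o(1)\bigr)$ — one obtains
$$\liminf_{(\omega,k)\to(0,0)}\Re M(\omega,k)\ \ge\ 1-\max\!\left(\norm{(\widehat{w})_-}_{L^\infty}m_+\ ,\ \frac{\epsilon_g\,\widehat{w}(0)_+}{2|\mathbb{S}^{d-1}|}\right)\ >\ 0$$
by~\eqref{eq:condition_negative_part_bis} and~\eqref{eq:condition_negative_part_bis2}; since $|M|\ge\Re M$, this proves (b).

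Putting (a), (b) and the decay at infinity together gives $\min_{(\omega,k)}|M(\omega,k)|>0$, i.e.~the boxed condition~\eqref{eq:condition_w_precise}, and therefore $1+\cL_1$ is invertible on $L^2_{t,x}(\R\times\R^d)$ with bounded inverse (the multiplier by $1/M$). The one genuinely delicate point is step (b): the Lindhard function $m_f$ is truly discontinuous at $(0,0)$, its real part oscillating between $m_+$ and $-\epsilon_g/(2|\mathbb{S}^{d-1}|)$, so the two smallness hypotheses~\eqref{eq:condition_negative_part_bis}--\eqref{eq:condition_negative_part_bis2} are exactly what keeps $M$ away from $0$ near the origin — and this is the very mechanism that fails for the zero-temperature Fermi sea, whose linear response is in addition unbounded.
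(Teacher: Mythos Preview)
Your proposal is correct and follows essentially the same route as the paper's proof: reduce to showing the Fourier symbol $M=1+\widehat w\,m_f$ is bounded away from zero, handle the region at infinity via decay of $\widehat w$ and of $m_f$, use the case analysis $\{k=0\}$ / $\{\omega=0\}$ / $\{\omega\neq0,\,k\neq0\}$ (the last via non-vanishing of $\Im m_f$ when $f'<0$), and treat the neighbourhood of $(0,0)$ separately using the definition of~$\epsilon_g$. Two small remarks: the ingredients you cite from ``Section~\ref{sec:linear-response}'' --- continuity of $m_f$ away from the origin, positivity of $m_f(0,k)$, and the strict sign of $\Im m_f$ for $\omega,k\neq0$ --- are in fact established by the paper \emph{inside} the proof of this corollary via the layer-cake representation $m_f=-\int_0^\infty m_d^{\rm F}(s,\cdot,\cdot)\,f'(s)\,ds$, so you would need to supply that argument; and the decay $m_f(\omega,k)\to0$ as $|\omega|/|k|\to\infty$ is not stated explicitly in the paper (the paper instead restricts to a compact $k$-set $A$ using only the uniform bound~\eqref{eq:estim_m_f}), though it does follow from the formula for $\cF^{-1}_\omega m_f$ by a dominated-convergence/Riemann--Lebesgue argument.
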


\begin{proof}
First we recall that $m_f$ is uniformly bounded by~\eqref{eq:estim_m_f}. Therefore we only have to look at the set 
$$A=\left\{k\in\R^d\ :\ |\widehat{w}(k)|\geq  \frac{1}{4|\mathbb{S}^{d-1}|}\left(\int_{\R^d}\frac{|\check{g}(x)|}{|x|^{d-2}}\,dx\right)\right\}.$$
On the complement of $A$, we have $|\widehat{w}\,m_f+1|\geq 1/2$. Since $\widehat{w}(k)\to0$ when $|k|\to\ii$, then $A$ is a compact set. Next, from the integral formula 
$$f(|k|^2)=-\int_0^\ii\1(|k|^2\leq s) f'(s)\,ds,$$ 
we infer that
$$m_f(\omega,k)=-\int_0^\ii m_d^{\rm F}(s,\omega,k) f'(s)\,ds.$$
This integral representation can be used to prove that $m_f$ is continuous on $\R\times\R_+\setminus\{(0,0)\}$. In general, the function $m_f$ is not continuous at $(0,0)$, however.

Since $m_d^{\rm F}(s,0,k)\geq0$ for all $k$ and $s\geq0$, we conclude that $m_f(0,k)\geq0$ and that 
$$m_f(0,k)\widehat{w}(k)\geq -m_f(0,k)\widehat{w}(k)_-\geq -\norm{\widehat{w}_-}_{L^\ii}\frac{1}{2|\mathbb{S}^{d-1}|}\left(\int_{\R^d}\frac{|\check{g}(x)|}{|x|^{d-2}}\,dx\right),$$
due to~\eqref{eq:estim_m_f}. In particular, 
$$|m_f(0,k)\widehat{w}(k)+1|\geq 1-\norm{\widehat{w}_-}_{L^\ii}\frac{1}{2|\mathbb{S}^{d-1}|}\left(\int_{\R^d}\frac{|\check{g}(x)|}{|x|^{d-2}}\,dx\right)>0$$
due to our assumption on $(\widehat{w})_-$. Similarly, we have $m_f(\omega,0)=0$ for all $\omega\neq0$ and therefore $m_f(\omega,0)\widehat{w}(0)+1=1$ is invertible on $\{k=0,\omega\neq0\}$. 

Now we look at $k\neq0$ and $\omega>0$ and we prove that the imaginary part of $m_f$ never vanishes. We write the argument for $d=1$, as it is very similar for $d\geq2$, using the integral representation~\eqref{eq:m_f_2D}. We  have
\begin{multline*}
\Im m_f(\omega,k)
=\frac{\sqrt{\pi}}{2\sqrt{2}|k|}\times\\
\times\int_0^\ii  
\bigg\{\1\left((\omega-|k|^2)^2\leq 4s|k|^2\right)-\1\left((\omega+|k|^2)^2\leq 4s|k|^2\right)\bigg\} f'(s)\,ds.
\end{multline*}
The difference of the two Heaviside functions is always $\geq0$ for $\omega>0$. Furthermore, it is equal to 1 for all $s$ in the interval
$$\frac{(\omega-|k|^2)^2}{4|k|^2}\leq s\leq \frac{(\omega+|k|^2)^2}{4|k|^2}.$$
Therefore we have
$$\Im m_f(\omega,k)\leq \frac{\sqrt{\pi}}{2\sqrt{2}|k|}\int_{\frac{(\omega-|k|^2)^2}{4|k|^2}}^{\frac{(\omega+|k|^2)^2}{4|k|^2}}  
 f'(s)\,ds<0$$
for all $\omega>0$ and $k\neq0$. For $\omega<0$ we can simply use that $\Im m_f(\omega,k)=-\Im m_f(-\omega,k)$ and this concludes the proof that the imaginary part does not vanish outside of $\{k=0\}\cup\{\omega=0\}$.

From the previous argument, we see that everything boils down to understanding the behavior of $\Re m_f$ in a neighborhood of $(0,0)$. At this point the maximal value is $\frac{1}2\int_0^\ii r\check{g}(r)\,dr$ and the minimal value is $-\epsilon_g2|\mathbb{S}^{d-1}|$ by definition, hence the result follows.
\end{proof}

We remark that 
$$\Re\,m_f(\omega,k)\underset{\substack{k\to0\\ \omega\to0}}\simeq \frac12\int_{0}^\ii t\check{g}(t)\cos\left(\frac{\omega}{2|k|} t\right)\,dt$$
and therefore we can express
$$-\epsilon_g:=\frac{1}{4|\mathbb{S}^{d-1}|}\min_{a\in\R}\int_{0}^\ii t\check{g}(t)\cos(a t)\,dt.$$

In the three physical cases~\eqref{eq:Fermi-gas-positive-temp}--\eqref{eq:Boltzmann-gas-positive-temp}, the function $f$ satisfies the assumptions of the corollary, and therefore $1+\cL_1$ is invertible with bounded inverse when $w$ satisfies~\eqref{eq:condition_negative_part_bis} and~\eqref{eq:condition_negative_part_bis2}. Numerical computations show that $\epsilon_g$ is always $>0$, but usually smaller than the maximum, by a factor 2 to 10. As an illustration, we display the function $\Re m_f(\omega,k)$ for $T=100$ and $\mu=1$ in Figure~\ref{fig:m_f} below.

\begin{figure}[h]
\centering
\includegraphics[width=8cm]{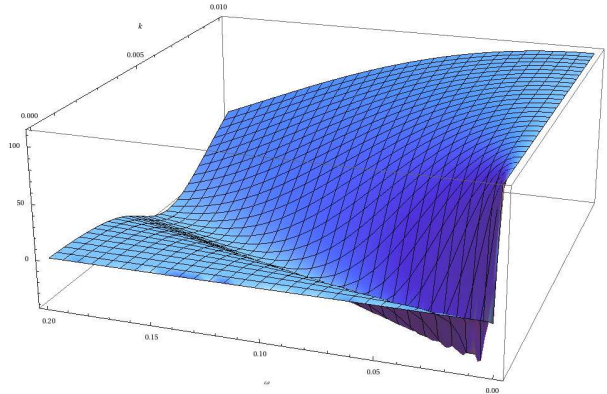}
\caption{Plot of $\Re\, m_f(\omega,k)$ in the fermionic case~\eqref{eq:Fermi-gas-positive-temp} for $d=2$, $T=100$ and $\mu=1$\label{fig:m_f}}
\end{figure}

\subsection{Boundedness of the linear response in $L^p_{t,x}$}

We have studied the boundedness of $\cL_1$ from $L^2_{t,x}$ to $L^2_{t,x}$. This is useful in dimension $d=2$, where the density $\rho_Q$ naturally belongs to $L^2_{t,x}$. However, in other space dimensions, we would like to prove that $\rho_Q$ belongs to $L^{1+2/d}_{t,x}$ and hence, it makes sense to ask whether $\cL_1$ is bounded from $L^p_{t,x}$ to $L^p_{t,x}$. This is the topic of this section. The study of Fourier multipliers acting on $L^p$ is a classical subject in harmonic analysis. We use theorems of Stein and Marcinkiewicz to infer the required boundedness. 

 \begin{proposition}[Boundedness of the linear response on $L^p$]
  Let $w\in L^1(\R^d)$ be such that $|x|^{d+2}w\in L^1(\R^d)$ and such that
  $$\left(\prod_{i\in I}|k_{i}|^2\prod_{j\in J}\partial_{k_{j}}\right)\hat{w}(k)\in L^\ii_k(\R^d),\,\,\forall I\subset\{1,...,d\},\,\,\forall J\subset I.$$
  Let also $h:\R^d\to\R$ be an even function such that
  $$\forall\alpha\in\N^d,\, |\alpha|\le d+3,\quad \int_{\R^d}(1+|k|^{d+4})|\partial^\alpha h(k)|\d{k}<+\ii$$
  and
  $$\left(\prod_{i\in I}\partial_{k_{i}}\right)h\in L^\ii_k(\R^d),\qquad\text{for all $I\subset\{1,...,d\}$}.$$
  Then the Fourier multiplier
  $$\cF_t\big\{\1(t\ge0)\sin(t|k|^2)h(2tk)\big\}$$
  defines a bounded operator from $L^p_{t,x}$ to itself, for every $1<p<\ii$.
 \end{proposition}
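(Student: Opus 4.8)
The plan is to verify the hypotheses of a multivariate Marcinkiewicz multiplier theorem (as stated, e.g., in Stein's book), which gives $L^p_{t,x}$-boundedness for $1<p<\ii$ of a Fourier multiplier $M(\omega,k)$ on $\R\times\R^d$ provided $M$ is bounded and satisfies the mixed-derivative Marcinkiewicz conditions: for every subset $S$ of the $d+1$ coordinate directions, the product of the corresponding first-order derivatives of $M$, multiplied by the product of the corresponding variables, is uniformly bounded (with the estimate uniform over dyadic rectangles). Here the multiplier is $M(\omega,k)=\cF_t\{\1(t\ge0)\sin(t|k|^2)h(2tk)\}(\omega,k)$, and the first step is to write this down more explicitly. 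Using $\sin(t|k|^2)=\tfrac{1}{2i}(e^{it|k|^2}-e^{-it|k|^2})$ and the formula $\cF_t[\1(t\ge0)e^{i t a}\,\psi(t)](\omega)$ involving a convolution of $\hat\psi$ with the distribution $(\omega-a-i0)^{-1}$, and since $t\mapsto h(2tk)$ has Fourier transform (in $t$) expressible through $\hat h$ evaluated at $\omega/(2|k|)$ up to Jacobian factors $|k|^{-d}$, one obtains $M(\omega,k)$ as a boundary value of a Cauchy-type integral of $\hat h$ along a line whose position depends on $\omega$ and $|k|$. The decay and smoothness hypotheses on $h$ (integrability of $(1+|k|^{d+4})|\partial^\alpha h|$ up to order $d+3$) are precisely what is needed to control $\hat h$ and its derivatives with enough decay, while the sup-bounds on $(\prod_{i\in I}\partial_{k_i})h$ are what survive after the Jacobian singularities $|k|^{-d}$ are cancelled.

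The main work, and the expected obstacle, is the bookkeeping of the Marcinkiewicz conditions near the two degenerate sets $\{k=0\}$ and $\{\omega=0\}$, where the change of variables $\tau=\omega/(2|k|)$ degenerates. The strategy is: (i) bound $M$ itself — this is essentially Proposition~\ref{prop:linear-response} together with the hypothesis $|x|^{d+2}w\in L^1$ being replaced by the corresponding moment condition on $h$, giving $M\in L^\ii$; (ii) for each coordinate subset $S\subset\{\omega,k_1,\dots,k_d\}$, differentiate $M$ in the directions of $S$, multiply by the corresponding coordinates, and check uniform boundedness. A $k$-derivative $\partial_{k_j}$ hitting $M$ produces three types of terms: one where it lands on the phase $t|k|^2$ (giving an extra factor $t k_j$, absorbed by one more moment of $h$), one where it lands on the $|k|^2$ inside $\sin$ via the $\tau=\omega/(2|k|)$ substitution (producing $\omega/|k|^2$-type factors that are tamed by the compensating $k_j$ in the Marcinkiewicz product), and one where it lands on $h(2tk)$ directly (converting a $\partial$ of $h$, which is why we need $h$ smooth with all mixed first derivatives in $L^\ii$, and up to order $d+3$ with weights for the decay). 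The $\omega$-derivative is simpler: in the Cauchy-integral representation $\partial_\omega$ produces a second-order pole $(\omega-a-i0)^{-2}$, but the factor $\omega$ in the Marcinkiewicz product plus the decay of $\hat h$ compensates.

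Finally, one assembles: since there are only finitely many coordinate subsets and each derivative pattern, after the substitution and after using the integrability/boundedness hypotheses on $h$, produces a finite bound independent of $(\omega,k)$ (and in fact of the dyadic rectangle, by homogeneity-type scaling in $t$), the Marcinkiewicz--Mikhlin--Stein criterion applies and yields the claimed $L^p_{t,x}\to L^p_{t,x}$ boundedness for all $1<p<\ii$. To conclude the boundedness of $\cL_1$ as a multiplier with kernel $\hat w(k)m_f(\omega,k)$, one then notes that multiplication by $\hat w(k)$ (a function of $k$ alone) preserves the Marcinkiewicz class under the stated hypotheses on $w$ — this is where $|x|^{d+2}w\in L^1$ and the sup-bounds on $(\prod_{i\in I}|k_i|^2\prod_{j\in J}\partial_{k_j})\hat w$ enter — so the product $\hat w(k)M(\omega,k)$ is again a Marcinkiewicz multiplier, hence bounded on every $L^p_{t,x}$, $1<p<\ii$. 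I expect the delicate point throughout to be verifying that no logarithmic or inverse-power blow-up in $\omega$ or $|k|$ escapes the weighted integrability assumptions on $h$; this is a careful but routine estimate once the Cauchy-integral representation is in hand.
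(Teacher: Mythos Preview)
Your strategy --- verify the Marcinkiewicz--Mikhlin conditions directly for the full multiplier $M(\omega,k)$ on $\R\times\R^d$ --- is \emph{not} what the paper does, and as written it is too sketchy to be a proof. The paper never takes the time Fourier transform at all. Instead it works with the kernel in the mixed variables $(t,k)$ and splits
\[
m(t,k)=\underbrace{\1(t\ge1)\hat w(k)\sin(t|k|^2)h(2tk)}_{m_1}+\underbrace{\1(0\le t\le1)\hat w(k)\sin(t|k|^2)h(2tk)}_{m_2}.
\]
For $m_1$ one computes the inverse spatial Fourier transform $M_1(t,x)$ explicitly, rescales $k\mapsto k/(2t)$, and checks Stein's Calder\'on--Zygmund kernel criterion $\|(t,x)\|^{d+2}|\nabla_{t,x}(\cF_x^{-1}m_1)|\in L^\ii$; the moment hypotheses on $h$ and the condition $|x|^{d+2}w\in L^1$ are exactly what make these gradient bounds hold. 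For $m_2$ one does not touch the time variable: since $m_2$ is supported in $\{0\le t\le 1\}$, it suffices to show $t\mapsto\|m_2(t,\cdot)\|_{L^p_x\to L^p_x}\in L^1_t$, and this is done by Marcinkiewicz in the $x$-variables only. The peculiar hypotheses on $\hat w$ (with $|k_i|^2$ rather than $|k_i|$) and the $L^\ii$ bounds on $\prod_{i\in I}\partial_{k_i}h$ are tailored precisely to this second step, where each $k_j\partial_{k_j}$ landing on $\sin(t|k|^2)$ produces a factor $2tk_j^2$ with $t\le 1$.

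The concrete gap in your proposal is that you never actually verify the $(\omega,k)$-Marcinkiewicz conditions, and it is not clear they follow from the stated hypotheses. For instance, when all $d$ operators $k_j\partial_{k_j}$ land on $\sin(t|k|^2)$ you generate a factor of order $t^d\prod_j k_j^2$, and after integration in $t$ against $h(2tk)$ this scales like $|k|^{d-1}$ for large $|k|$; the assumptions on $\hat w$ do not obviously absorb this in dimensions $d\ge 4$. Your claimed ``$|k|^{-d}$ Jacobian'' is also wrong --- the Fourier transform is one-dimensional in $t$, so the scaling factor is $|k|^{-1}$, not $|k|^{-d}$. The split at $t=1$ is exactly the device that sidesteps these difficulties: large $t$ is handled by kernel decay in $(t,x)$, small $t$ by the trivial $L^1_t$ bound. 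If you want to pursue the direct $(\omega,k)$ route you would need to carry out the derivative bookkeeping in full and check that the specific hypotheses of the proposition (which were designed for a different argument) are actually sufficient; the phrase ``careful but routine'' does not discharge this.
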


The conditions on $h$ are fulfilled if for instance $h$ is a Schwartz function, hence they are fulfilled for our physical examples~\eqref{eq:Fermi-gas-positive-temp}--\eqref{eq:Boltzmann-gas-positive-temp}, where we take $h=\check{g}$.

 \begin{proof}
We define
$$m_1(t,k)=\1(t\ge1)\hat{w}(k)\sin(t|k|^2)h(2tk)$$
and
$$m_2(t,k)=\1(0\leq t\leq 1)\hat{w}(k)\sin(t|k|^2)h(2tk),$$
and use a different criterion for these two multipliers.

To show that $m_1$ defines a bounded operator on $L^p$, we use the criterion of Stein \cite[Thm. 1, II \S2]{Stein-70}. We write $m_1(t,k)=\hat{w}(k)\tilde{m_1}(t,k)$. We first prove estimates on $\tilde{m_1}$, which then imply that $m_1$ defines a bounded Fourier multiplier on $L^p$ by Stein's theorem. Computing the inverse Fourier transform of $\tilde{m_1}$, one has
 $$M_1(t,x):=[\cF^{-1}_k\tilde{m_1}](t,x)=\1(t\ge1)(2\pi)^{-d/2}\int_{\R^d}\sin(t|k|^2)h(2tk)e^{ix\cdot k}\d{k}.$$
 Then, we have
  \begin{equation}\label{eq:nablaxm1NLSgeneral}
      \nabla_x M_1(t,x)=\1(t\ge1)\frac{(2\pi)^{-d/2}}{(2t)^{d+1}}i\int_{\R^d}kh(k)\sin\left(\frac{|k|^2}{4t}\right)e^{i\frac{x\cdot k}{2t}}\,dk.
  \end{equation}
 From this formula, we see that for all $(t,x)$,
 \begin{equation}\label{eq:est-t-nabla-x-M1}
    t^{d+2}|\nabla_x M_1(t,x)|\le C\int_{\R^d}|k|^3|h(k)|\,dk.
 \end{equation}
 Next, let $1\le j\le d$ and notice that
 $$x_j^{d+2}e^{i\frac{x\cdot k}{2t}}=\frac{d^{d+2}}{dk_j^{d+2}}(2t)^{d+2}(-i)^{d+2}e^{i\frac{x\cdot k}{2t}},$$
 and hence by an integration by parts we obtain
 \begin{multline*}
    x_j^{d+2}\nabla_x M_1(t,x)\\
    =\1(t\ge1)(2\pi)^{-d/2}2ti(-i)^{d+2}\int_{\R^d}\frac{d^{d+2}}{dk_j^{d+2}}\left[kh(k)\sin\left(\frac{|k|^2}{4t}\right)\right]e^{i\frac{x\cdot k}{2t}}\,dk.
 \end{multline*}
 When the $k_j$-derivative hits at least once $\sin(|k|^2/4t)$, one gains at least $1/4t$ compensating the $2t$ before the integral; the only term for which we have to prove that it is bounded in $t$ is when all the $k_j$-derivatives hit the term $kh(k)$, which is
 $$\1(t\ge1)(2\pi)^{-d/2}2it(-i)^{d+2}\int_{\R^d}\frac{d^{d+2}}{dk_j^{d+2}}\left[kh(k)\right]\sin\left(\frac{|k|^2}{4t}\right)e^{i\frac{x\cdot k}{2t}}\,dk.$$
 It is also bounded since $|\sin(|k|^2/4t)|\le |k|^2/4t$. We deduce that for all $(t,x)$,
 \begin{equation}\label{eq:est-x-nabla-x-M1}
    |x|^{d+2}|\nabla_xM_1(t,x)|\le C\sup_{\substack{\alpha\in\N^d \\ |\alpha|\le d+2}}\int_{\R^d}(1+|k|^{d+3})|\partial^\alpha h(k)|\,dk.
 \end{equation}
 For the time derivative we use the form
 $$M_1(t,x)=\1(t\ge1)(2\pi)^{-d/2}\int_{\R^d}h(2tk)\sin(t|k|^2)\cos(x\cdot k)\,dk$$
 to infer that for $t\neq1$,
 \begin{align}
    \partial_t M_1(t,x)  =&  2\1(t\ge1)(2\pi)^{-d/2}\int_{\R^d}k\cdot\nabla_kh(2tk)\sin(t|k|^2)\cos(x\cdot k)\d{k}\nonumber\\
     &+\1(t\ge1)(2\pi)^{-d/2}\int_{\R^d}|k|^2h(2tk)\cos(t|k|^2)\cos(x\cdot k)\,dk \nonumber\\
     =& \frac{2\1(t\ge1)}{(2t)^{d+1}}(2\pi)^{-d/2}\int_{\R^d}k\cdot\nabla_kh(k)\sin\left(\frac{|k|^2}{4t}\right)\cos\left(\frac{x\cdot k}{2t}\right)\,dk\nonumber\\
     &+\frac{\1(t\ge1)}{(2t)^{d+2}}(2\pi)^{-d/2}\int_{\R^d}|k|^2h(k)\cos\left(\frac{|k|^2}{4t}\right)\cos\left(\frac{x\cdot k}{2t}\right)\,dk.\label{eq:dtm1NLSgeneral}
 \end{align}
 By the same method as before, we infer
 \begin{equation}\label{eq:est-t-x-nabla-t-M1}
   \|(t,x)\|^{d+2}|\partial_t M_1(t,x)|\le C\sup_{\substack{\alpha\in\N^d \\ |\alpha|\le d+3}}\int_{\R^d}(1+|k|^{d+4})|\partial^\alpha h(k)|\,dk.  
 \end{equation}
Now let us go back to the multiplier $m_1$. We have
 $$\cF_x^{-1}m_1(t,x)=(2\pi)^{d/2}(w\star M_1(t,\cdot))(x),$$
 and hence
 $$\nabla_{t,x}\cF_x^{-1}m_1(t,x)=(2\pi)^{d/2}(w\star\nabla_{t,x}M_1(t,\cdot))(x).$$
 First we have
 $$|t^{d+2}\nabla_{t,x}\cF_x^{-1}m_1(t,x)|\le C\|w\|_{L^1_x}\|t^{d+2}\nabla_{t,x}M_1(t,x)\|_{L^\ii_{t,x}},$$
 which is finite thanks to \eqref{eq:est-t-nabla-x-M1}, \eqref{eq:est-x-nabla-x-M1}, and \eqref{eq:est-t-x-nabla-t-M1}. Next, 
 \begin{multline*}
      |x|^{d+2}|\nabla_{t,x}\cF_x^{-1}m_1(t,x)|\le C\||\cdot|^{d+2}w\|_{L^1_x}\|\nabla_{t,x}M_1(t,x)\|_{L^\ii_{t,x}}\\
      +C\|w\|_{L^1_x}\||x|^{d+2}\nabla_{t,x}M_1(t,x)\|_{L^\ii_{t,x}}.
 \end{multline*}
 The second term is finite also from \eqref{eq:est-t-nabla-x-M1} and \eqref{eq:est-x-nabla-x-M1}, while the first term is finite by the expressions \eqref{eq:nablaxm1NLSgeneral} and \eqref{eq:dtm1NLSgeneral}. As a consequence, we can apply Stein's theorem to $m_1$ and we deduce that the corresponding operator is bounded on $L^p_{t,x}$ for all $1<p<\ii$. 

The multiplier $m_2$ is treated differently. We show that 
 $$m_2\in L^1_t(\R,\cB(L^p_x\to L^p_x)),$$
 which is enough to show that $m_2$ defines a bounded operator on $L^p_{t,x}$. Indeed, for any $\phi\in L^p_{t,x}$, define the Fourier multiplication operator $T_{m_2}$ by
 $$(T_{m_2}\phi)(t,x)=\int_\R \cF_x^{-1}\left[m_2(t-t',\cdot)(\cF_x\phi)(t',\cdot)\right](x)\,dt'.$$
 Then, we have 
 \begin{eqnarray*}
    \|T_{m_2}\phi(t)\|_{L^p_x} & \le & \int_\R\|\cF_x^{-1}[m_2(t-t',\cdot)(\cF_x\phi)(t',\cdot)]\|_{L^p_x}\,dt'\\
     & \le & \int_\R \|m_2(t-t')\|_{\cB(L^p_x\to L^p_x)}\|\phi(t')\|_{L^p_x}\,dt',\\
 \end{eqnarray*}
 and hence
 $$\|T_{m_2}\phi\|_{L^p_{t,x}}\le \|m_2\|_{L^1_t(\R,\cB(L^p_x\to L^p_x))}\|\phi\|_{L^p_{t,x}}.$$
 Hence, let us show that $\|m_2\|_{L^p_x\to L^p_x}\in L^1_t$. We estimate $\|m_2\|_{L^p_x\to L^p_x}$ by the Marcinkiewicz theorem \cite[Cor. 5.2.5]{Grafakos-book}. Namely, we have to show that for all $1\le i_1,\ldots,i_\ell \le d$ all different indices, we have 
 $$k_{i_1}\cdots k_{i_\ell}\partial_{k_{i_1}}\cdots\partial_{k_{i_\ell}}m_2(t,k)\in L^\ii_k,$$
 and if so the Marcinkiewicz theorem tells us that
 $$\|m_2(t)\|_{L^p_x\to L^p_x}\le C\sup_{i_1,\ldots,i_\ell}\|k_{i_1}\cdots k_{i_\ell}\partial_{k_{i_1}}\cdots\partial_{k_{i_\ell}}m_2(t,k)\|_{L^\ii_k}.$$
 A direct computation shows that 
 \begin{multline*}
    |k_{i_1}\cdots k_{i_\ell}\partial_{k_{i_1}}\cdots\partial_{k_{i_\ell}}m_2(t,k)|\\
    \le C\1_{0\le t\le 1}\sum_{I\subset\{i_1,\ldots,i_\ell\}}\sum_{J\subset I}|k_{i_1}|^2\cdots|k_{i_\ell}|^2|\partial_I\hat{w}(k)|(\partial_Jh)(2tk)|,
 \end{multline*}
 where we used the notation $\partial_Jh:=\prod_{j\in J}\partial_{k_j}h$. Hence,
 \begin{multline*}
    \|k_{i_1}\cdots k_{i_\ell}\partial_{k_{i_1}}\cdots\partial_{k_{i_\ell}}m_2(t,k)\|_{L^\ii_k}\\
    \le C\1_{0\le t\le 1}\sup_{I,J\subset\{i_1,\ldots,i_\ell\}}\||k_{i_1}|^2\cdots|k_{i_\ell}|^2|\partial_I\hat{w}(k)|\|_{L^\ii_k}\|\partial_Jh\|_{L^\ii_k},
 \end{multline*}
 which is obviously a $L^1_t$--function. 
\end{proof}

\section{Higher order terms}\label{sec:higher-order}

In this section, we explain how to treat the higher order terms in \eqref{eq:rhoQ}. We recall the decomposition of the solution for all $t\ge0$:
$$\rho_Q(t)=\rho\left[e^{it\Delta}\cW_{w*\rho_Q}(t)(\gamma_f+Q_0)\cW_{w*\rho_Q}(t)^*e^{-it\Delta}\right]-\rho_{\gamma_f}.$$
We first estimate the terms involving $Q_0$, in dimension 2.

\begin{lemma}\label{lemma:higher-order-Q0}
 Let $Q_0\in\gS^{4/3}(L^2(\R^2))$ and $V\in L^2_{t,x}(\R_+\times\R^2)$. Then, we have the following estimate for all $n,m\ge0$
\begin{multline*}
 \norm{\rho\left[e^{it\Delta}\cW_V^{(n)}(t)Q_0\cW_V^{(m)}(t)^*e^{-it\Delta}\right]}_{L^2_{t,x}(\R_+\times\R^2)}\\
 \le C\norm{Q_0}_{\gS^{4/3}}\frac{C^{n+m}\norm{V}_{L^2_{t,x}}^{n+m}}{(n!)^{\frac 14}(m!)^{\frac 14}},
\end{multline*}
for some $C>0$ independent of $Q_0$, $n$, $m$, and $V$.
\end{lemma}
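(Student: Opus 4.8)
The plan is to interpolate between two endpoint estimates obtained from the Strichartz estimate in Schatten spaces (Theorem~\ref{thm:est-Wn}) together with the dual Strichartz estimate~\eqref{eq:Strichartz}. First I would record the dual version of~\eqref{eq:Strichartz} in the form $d=2$, $p=q=2$: for any operator $A$,
$$\norm{\rho\left[e^{it\Delta}Ae^{-it\Delta}\right]}_{L^2_{t,x}(\R_+\times\R^2)}\le C\norm{A}_{\gS^{4/3}}.$$
Applying this with $A=\cW_V^{(n)}(t)Q_0\cW_V^{(m)}(t)^*$ reduces the problem to estimating $\norm{\cW_V^{(n)}(t)Q_0\cW_V^{(m)}(t)^*}_{\gS^{4/3}}$ uniformly in $t$, and then the Hölder inequality for Schatten norms, $\norm{ABC}_{\gS^{r}}\le\norm{A}_{\gS^{a}}\norm{B}_{\gS^{b}}\norm{C}_{\gS^{c}}$ with $1/r=1/a+1/b+1/c$, does the job: take $B=Q_0\in\gS^{4/3}$ and put the two wave operators in $\gS^\ii=\cB$, giving the bound $\norm{\cW_V^{(n)}(t)}\,\norm{Q_0}_{\gS^{4/3}}\,\norm{\cW_V^{(m)}(t)}$. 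However, the operator-norm bound on $\cW_V^{(n)}$ coming from Theorem~\ref{thm:est-Wn} has a factor $C^n/(n!)^{1/p-\epsilon}$ with $p=2$, so it yields $(n!)^{-1/2+\epsilon}$ for $n\ge2$ and a bare $C\norm{V}_{L^2_{t,x}}$ for $n=1$; together with the analogous factor for $m$ this is \emph{better} than the claimed $(n!)^{-1/4}(m!)^{-1/4}$, and for $n=0$ or $m=0$ one simply uses $\cW_V^{(0)}=1$ with operator norm $1$. So in fact the crude route already works.

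Let me instead present the cleanest self-contained argument. Fix $t$ and write $R=\cW_V^{(n)}(t)Q_0\,\cW_V^{(m)}(t)^*$. By Hölder in Schatten spaces with exponents $(\infty,4/3,\infty)$ we get $\norm{R}_{\gS^{4/3}}\le\norm{\cW_V^{(n)}(t)}_{\cB}\,\norm{Q_0}_{\gS^{4/3}}\,\norm{\cW_V^{(m)}(t)}_{\cB}$. For the operator norms I invoke Theorem~\ref{thm:est-Wn} in dimension $d=2$ with $q=2$, $p=2$, $0<\epsilon<1/2$: using $\norm{\cdot}_{\cB}=\norm{\cdot}_{\gS^\ii}\le\norm{\cdot}_{\gS^{2\lceil q/n\rceil}}$ we obtain, for $n\ge1$,
$$\norm{\cW_V^{(n)}(t)}_{\cB}\le\frac{C^n}{(n!)^{\frac12-\epsilon}}\norm{V}_{L^2_{t,x}}^{n}\le\frac{C^n}{(n!)^{\frac14}}\norm{V}_{L^2_{t,x}}^{n},$$
absorbing the gap $(n!)^{1/4-\epsilon}$ into the constant after choosing $\epsilon<1/4$; for $n=0$ the left side equals $1$ and the inequality is trivial. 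The same holds for $m$. Multiplying the three bounds and then applying the dual Strichartz estimate~\eqref{eq:Strichartz} (valid since $2/p+d/q=d=2$ and $q=2\le 1+2/d=2$, with $\tfrac{2q}{q+1}=\tfrac43$) to the operator $R$, which is independent of the time variable only in the sense that the estimate is uniform in $t$, yields
$$\norm{\rho\left[e^{it\Delta}\cW_V^{(n)}(t)Q_0\cW_V^{(m)}(t)^*e^{-it\Delta}\right]}_{L^2_{t,x}}\le C\,\norm{R}_{\gS^{4/3}}\le C\norm{Q_0}_{\gS^{4/3}}\frac{C^{n+m}\norm{V}_{L^2_{t,x}}^{n+m}}{(n!)^{\frac14}(m!)^{\frac14}}.$$

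The only genuinely delicate point is the applicability of~\eqref{eq:Strichartz}: that estimate is stated for $A$ a fixed operator, whereas here the operator $R=R(t)$ depends on time through the wave operators. I would handle this by noting that for each fixed $s\ge0$ the operator $R(s)=\cW_V^{(n)}(s)Q_0\cW_V^{(m)}(s)^*$ is a fixed bounded (indeed $\gS^{4/3}$) operator, and the left-hand density $\rho[e^{it\Delta}\cW_V^{(n)}(t)Q_0\cW_V^{(m)}(t)^*e^{-it\Delta}]$ equals $\rho[e^{i(t-s)\Delta}(e^{is\Delta}R(s)e^{-is\Delta})e^{-i(t-s)\Delta}]$ only at $t=s$, so a direct application is not immediate. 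The correct fix, which is the one used throughout~\cite{FraLewLieSei-13} and~\cite{LewSab-13a}, is that~\eqref{eq:est-W1}–\eqref{eq:est-Wn} are \emph{already} the $t$-uniform Schatten bounds one needs, and one combines them with the dual estimate applied to $Q_0$ alone: write the density as $\rho$ of a product and use the bilinear/trilinear form of the Strichartz estimate (the $L^2_{t,x}$–$\gS^{4/3}$ duality applied after pulling the time-dependent factors out via their operator norms, which are bounded by the displayed $t$-uniform quantities since $V$ is fixed on all of $\R_+$). I expect writing this last bookkeeping carefully — rather than any new analytic input — to be the main obstacle; everything else is Hölder in Schatten spaces plus the two cited theorems.
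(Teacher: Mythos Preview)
Your proposal has a genuine gap at exactly the point you flag as ``bookkeeping'': the Strichartz estimate~\eqref{eq:Strichartz} applies to a \emph{fixed} operator $A$, and the bound $\norm{\rho[e^{it\Delta}R(t)e^{-it\Delta}]}_{L^2_{t,x}}\le C\sup_t\norm{R(t)}_{\gS^{4/3}}$ you would need for a time-dependent $R(t)$ is simply false. A counterexample: take any $A_0\in\gS^{4/3}$ with nonzero density and set $R(t)=e^{-it\Delta}A_0e^{it\Delta}$; then $e^{it\Delta}R(t)e^{-it\Delta}=A_0$ for all $t$, the density is constant in time and hence not in $L^2_t$, while $\sup_t\norm{R(t)}_{\gS^{4/3}}=\norm{A_0}_{\gS^{4/3}}<\infty$. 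The dispersion that makes~\eqref{eq:Strichartz} work comes from the \emph{free} evolution of a fixed object; once $R(t)$ carries its own time dependence you cannot ``pull out'' a uniform operator-norm bound on the wave factors and still conclude $L^2_t$ integrability of the density.

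The paper avoids this by duality: pair the density against $U\in L^2_{t,x}$, write $\int_0^\infty\tr\big[U(t)e^{it\Delta}\cW_V^{(n)}(t)Q_0\cW_V^{(m)}(t)^*e^{-it\Delta}\big]\,dt$, use cyclicity of the trace to isolate the genuinely time-independent $Q_0$, and estimate the remaining time-integrated operator
\[
\int_0^\infty \cW_V^{(m)}(t)^*e^{-it\Delta}U(t,x)e^{it\Delta}\cW_V^{(n)}(t)\,dt
\]
in $\gS^4$ by a direct generalization of Theorem~\ref{thm:est-Wn} (this is where the $(n!)^{-1/4}(m!)^{-1/4}$ actually arises, and why the exponents are $1/4$ rather than the $1/2-\epsilon$ you obtain). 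Hölder against $Q_0\in\gS^{4/3}$ then finishes. What you call the ``last bookkeeping'' is in fact the whole argument; the route through a pointwise-in-$t$ Schatten bound on $R(t)$ cannot be salvaged.
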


\begin{proof}
 Defining $\cW^{(0)}_V(t):=1$, for $n,m\ge0$, the density of
 $$e^{it\Delta}\cW^{(n)}_V(t)Q_0\cW^{(m)}_V(t)^*e^{-it\Delta}$$
 is estimated by duality in the following fashion. Let $U\in L^2_{t,x}(\R_+\times\R^2)$. The starting point is the formula
 \begin{multline*}
  \int_0^\ii\int_{\R^2}U(t,x)\rho\left[e^{it\Delta}\cW^{(n)}_V(t)Q_0\cW^{(m)}_V(t)^*e^{-it\Delta}\right](t,x)\,dx\,dt\\
  =\int_0^\ii\tr\left[U(t,x)e^{it\Delta}\cW^{(n)}_V(t)Q_0\cW^{(m)}_V(t)^*e^{-it\Delta}\right]\,dt.
 \end{multline*}
 By cyclicity of the trace, we have
 \begin{multline*}
    \tr\left[U(t,x)e^{it\Delta}\cW^{(n)}_V(t)Q_0\cW^{(m)}_V(t)^*e^{-it\Delta}\right]\\
    =\tr\left[\cW^{(m)}_V(t)^*e^{-it\Delta}U(t,x)e^{it\Delta}\cW^{(n)}_V(t)Q_0\right]
 \end{multline*}
 A straightforward generalization of Theorem \ref{thm:est-Wn} shows that we have 
 \begin{multline*}
    \norm{\int_0^\ii\cW^{(m)}_V(t)^*e^{-it\Delta}U(t,x)e^{it\Delta}\cW^{(n)}_V(t)\,dt}_{\gS^4}\\
    \le \norm{U}_{L^2_{t,x}}\frac{C^n\norm{V}_{L^2_{t,x}}^n}{(n!)^{1/4}}\frac{C^m\norm{V}_{L^2_{t,x}}^m}{(m!)^{1/4}},
 \end{multline*}
 and hence using that $Q_0\in\gS^{4/3}$ and Hölder's inequality, we infer that
 $$\norm{\rho\left[e^{it\Delta}\cW^{(n)}_V(t)Q_0\cW^{(m)}_V(t)^*e^{-it\Delta}\right]}_{L^2_{t,x}}\le\norm{Q_0}_{\gS^{4/3}}\frac{C^n\norm{V}_{L^2_{t,x}}^n}{(n!)^{1/4}}\frac{C^m\norm{V}_{L^2_{t,x}}^m}{(m!)^{1/4}}.$$
 This concludes the proof of the lemma.
\end{proof}

When $d\ge2$, the corresponding result is
\begin{lemma}
 Let $d\ge2$, $Q_0\in\gS^{\frac{d+2}{d+1}}(L^2(\R^d))$, $1<q\le 1+2/d$ and $p$ such that $2/p+d/q=d$. Let $V\in L^{p'}_tL^{q'}_x(\R_+\times\R^d)$. Then, we have the following estimate for any $n,m\ge0$
\begin{multline*}
  \norm{\rho\left[e^{it\Delta}\cW_V^{(n)}(t)Q_0\cW_V^{(m)}(t)^*e^{-it\Delta}\right]}_{L^{p}_tL^{q}_x(\R_+\times\R^d)}\\
  \le C\norm{Q_0}_{\gS^{\frac{d+2}{d+1}}}\frac{C^{n+m}\norm{V}_{L^{p'}_tL^{q'}_x}^{n+m}}{(n!)^{\frac{1}{2q'}}(m!)^{\frac{1}{2q'}}},
\end{multline*}
for some $C>0$ independent of $Q_0$, $n$, $m$, and $V$.
\end{lemma}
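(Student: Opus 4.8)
The plan is to run the duality argument used for Lemma~\ref{lemma:higher-order-Q0}, now with the scaling-admissible pair $(p,q)$ and its dual $(p',q')$ in place of $(2,2)$, the only genuinely new ingredient being the $d$-dimensional higher-order Strichartz estimate in Schatten spaces for the relevant two-sided wave operators --- the obvious analogue of Theorem~\ref{thm:est-Wn}, contained in (or proved by the method of)~\cite{FraLewLieSei-13}. Observe at the outset that $2/p+d/q=d$ forces $2/p'+d/q'=2$, so $(p',q')$ lies exactly in the range where \eqref{eq:est-W1} and \eqref{eq:est-Wn} apply, the relevant Schatten exponent now being $2q'$ and its refinements $2\lceil q'/n\rceil$.

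First I would argue by duality. Since $(L^p_tL^q_x)^*=L^{p'}_tL^{q'}_x$, it suffices to estimate
\[ \left|\int_0^\ii\!\int_{\R^d}U(t,x)\,\rho\Big[e^{it\Delta}\cW^{(n)}_V(t)Q_0\cW^{(m)}_V(t)^*e^{-it\Delta}\Big](t,x)\,dx\,dt\right| \]
for an arbitrary $U\in L^{p'}_tL^{q'}_x$ with $\norm{U}_{L^{p'}_tL^{q'}_x}\le1$. Writing the density tested against $U$ as a trace and using cyclicity exactly as in the proof of Lemma~\ref{lemma:higher-order-Q0}, this quantity equals $|\tr[A_{n,m}Q_0]|$ with
\[ A_{n,m}:=\int_0^\ii \cW^{(m)}_V(t)^*\,e^{-it\Delta}\,U(t,\cdot)\,e^{it\Delta}\,\cW^{(n)}_V(t)\,dt, \]
where $U(t,\cdot)$ also denotes the operator of multiplication by $x\mapsto U(t,x)$. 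By Hölder's inequality for Schatten norms with the conjugate pair $\big(d+2,\tfrac{d+2}{d+1}\big)$,
\[ |\tr[A_{n,m}Q_0]|\le\norm{A_{n,m}}_{\gS^{d+2}}\,\norm{Q_0}_{\gS^{\frac{d+2}{d+1}}}, \]
so the lemma reduces to the operator bound
\[ \norm{A_{n,m}}_{\gS^{d+2}}\le C^{n+m+1}\,\frac{\norm{U}_{L^{p'}_tL^{q'}_x}\,\norm{V}_{L^{p'}_tL^{q'}_x}^{\,n+m}}{(n!)^{\frac1{2q'}}(m!)^{\frac1{2q'}}}, \]
after which one absorbs $C\norm{U}_{L^{p'}_tL^{q'}_x}\le C$ and takes the supremum over $U$.

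Proving this operator bound is where the real work sits. Expanding the two Duhamel simplices, $A_{n,m}$ is, up to a sign, the integral over $t\in\R_+$ and over $\{0\le s_1\le\dots\le s_m\le t\}\times\{0\le\tau_1\le\dots\le\tau_n\le t\}$ of an $(n+m+1)$-fold product of free-evolution-conjugated multiplication operators carrying, in order, $V(s_1),\dots,V(s_m)$, then $U(t)$ at the top time $t$, then $V(\tau_n),\dots,V(\tau_1)$; this is precisely the shape of operator estimated in the proof of~\cite[Thm 3]{FraLewLieSei-13}, and that argument transplants almost verbatim. Each elementary block $\int e^{-is\Delta}M_{W(s)}e^{is\Delta}\,ds$ is controlled in $\gS^{2q'}$ by $C\norm{W}_{L^{p'}_tL^{q'}_x}$ (this is \eqref{eq:est-W1} read with $q'$); consecutive blocks within the increasing $s$-chain or the decreasing $\tau$-chain are grouped into $\cW^{(\ell)}$-type pieces estimated in the smaller spaces $\gS^{2\lceil q'/\ell\rceil}$ by \eqref{eq:est-Wn}; the resulting pieces, together with the isolated $U$-block in $\gS^{2q'}$, are multiplied by Hölder for Schatten norms; and the volumes of the two time-simplices provide the factorial gain. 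The only cosmetic difference with~\cite{FraLewLieSei-13} is that one of the $n+m+1$ potentials is $U$ instead of $V$, which is immaterial since each block estimate sees only the $L^{p'}_tL^{q'}_x$ norm of its own potential.

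I expect the main obstacle to be the Schatten bookkeeping in this last estimate: one must choose the grouping of the $n+m+1$ blocks so that the reciprocals of all the Schatten exponents it produces add up to at least $\tfrac1{d+2}$, so that the Hölder product indeed lands in $\gS^{d+2}$ and can be paired with $Q_0\in\gS^{(d+2)/(d+1)}$ --- this is exactly where the hypothesis $1<q\le1+2/d$, equivalently $q'\ge(d+2)/2$ (with $2q'=d+2$ at the endpoint), is used. Pinning the output exponent at $d+2$, rather than letting it collapse to $2\lceil q'/(n+m+1)\rceil$ as in \eqref{eq:est-Wn}, is what forces the weaker-but-clean factorial power $\tfrac1{2q'}$ in place of the $\tfrac1{p'}-\epsilon$ of Theorem~\ref{thm:est-Wn} --- precisely the loss already present in the case $d=2$ of Lemma~\ref{lemma:higher-order-Q0}, where $2q'=4=d+2$. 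Resumming over the admissible groupings turns the product of the two simplex volumes into $(n!)^{-1/(2q')}(m!)^{-1/(2q')}$ up to the constant $C^{n+m+1}$, which finishes the proof; the duality and trace steps, identical to those in Lemma~\ref{lemma:higher-order-Q0}, present no difficulty.
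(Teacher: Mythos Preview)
Your proposal is correct and follows exactly the approach the paper indicates: the paper does not give a full proof of this lemma but simply states that it ``follows the same lines as in $d=2$'' and reduces to the $\gS^{d+2}$ bound
\[
\norm{\int_0^\ii\cW^{(m)}_V(t)^*e^{-it\Delta}U(t,x)e^{it\Delta}\cW^{(n)}_V(t)\,dt}_{\gS^{d+2}}
\le \norm{U}_{L^{p'}_tL^{q'}_x}\frac{C^{n+m}\norm{V}_{L^{p'}_tL^{q'}_x}^{n+m}}{(n!)^{\frac{1}{2q'}}(m!)^{\frac{1}{2q'}}},
\]
which is precisely the operator bound you isolate for $A_{n,m}$; your duality step, the H\"older pairing with $Q_0\in\gS^{(d+2)/(d+1)}$, and your observation that $q\le 1+2/d$ gives $2q'\ge d+2$ so that the Schatten exponents fit, all match the paper's intended argument.
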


The proof follows the same lines as in $d=2$, and relies on the following estimate for any $n,m$ 
\begin{multline*}
    \norm{\int_0^\ii\cW^{(m)}_V(t)^*e^{-it\Delta}U(t,x)e^{it\Delta}\cW^{(n)}_V(t)\,dt}_{\gS^{d+2}}\\
    \le \norm{U}_{L^{p'}_tL^{q'}_x}\frac{C^n\norm{V}_{L^{p'}_tL^{q'}_x}^n}{(n!)^{\frac{1}{2q'}}}\frac{C^m\norm{V}_{L^{p'}_tL^{q'}_x}^m}{(m!)^{\frac{1}{2q'}}}.
 \end{multline*}

We see that the terms involving $Q_0$ can be treated in any dimension, provided that $Q_0$ is in an adequate Schatten space. This is not the case for the terms involving $\gamma_f$, for which we can only deal with the higher orders. 

\begin{lemma}\label{lemma:higher-order-gamma}
 Let $d\ge1$, $g:\R^d\to\R$ such that $\check{g}\in L^1(\R^d)$, $1<q\le 1+2/d$ and $p$ such that $2/p+d/q=d$. Let $V\in L^{p'}_tL^{q'}_x(\R_+\times\R^d)$. Then, for all $n,m$ such that 
 $$n+m+1\ge 2q',$$
 we have 
 \begin{multline*}
  \norm{\rho\left[e^{it\Delta}\cW^{(n)}_V(t)\gamma_f\cW^{(m)}_V(t)^*e^{-it\Delta}\right]}_{L^p_tL^q_x}\\
  \le C\|\check{g}\|_{L^1}\frac{C^n\norm{V}_{L^{p'}_tL^{q'}_x}^n}{(n!)^{\frac{1}{2q'}}}\frac{C^m\norm{V}_{L^{p'}_tL^{q'}_x}^m}{(m!)^{\frac{1}{2q'}}}.
 \end{multline*} 
where $\gamma_f=g(-i\nabla)$.
\end{lemma}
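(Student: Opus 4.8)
The plan is to estimate the density $\rho[e^{it\Delta}\cW^{(n)}_V(t)\gamma_f\cW^{(m)}_V(t)^*e^{-it\Delta}]$ by duality against a test function $U\in L^{p'}_tL^{q'}_x$, exactly as in the proof of Lemma~\ref{lemma:higher-order-Q0}. Pairing with $U$ and using cyclicity of the trace, we must bound
\[
\int_0^\ii \tr\!\left[\cW^{(m)}_V(t)^*e^{-it\Delta}U(t)e^{it\Delta}\cW^{(n)}_V(t)\,\gamma_f\right]dt.
\]
The key structural observation is that $\gamma_f=g(-i\nabla)$ has integral kernel $\gamma_f(x,y)=(2\pi)^{-d/2}\check g(x-y)$, so since $\check g\in L^1(\R^d)$ the operator $\gamma_f$ factors as $\gamma_f = A^*B$ with $A,B$ both bounded — more precisely one writes $\gamma_f$ as a superposition (integral over the shift variable $z$) of rank-related pieces $e^{iz\cdot(-i\nabla)}$ weighted by $\check g$, each of norm $1$, with total weight $\|\check g\|_{L^1}$. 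Hence it suffices to control, uniformly in the translation parameter, the $\gS^{?}$-norm of $\cW^{(m)}_V(t)^*e^{-it\Delta}U(t)e^{it\Delta}\cW^{(n)}_V(t)$ integrated against a unimodular Fourier multiplier, and then absorb the bounded multiplier.

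Next I would invoke the generalization of Theorem~\ref{thm:est-Wn} already used in Lemma~\ref{lemma:higher-order-Q0}: the operator $\int_0^\ii \cW^{(m)}_V(t)^*e^{-it\Delta}U(t)e^{it\Delta}\cW^{(n)}_V(t)\,dt$ lies in $\gS^{d+2}$ with norm $\le \|U\|_{L^{p'}_tL^{q'}_x}\,C^{n+m}\|V\|^{n+m}_{L^{p'}_tL^{q'}_x}/((n!)^{1/2q'}(m!)^{1/2q'})$. Actually, tracking the Schatten exponents more carefully via \eqref{eq:est-W1}--\eqref{eq:est-Wn}: $\cW^{(n)}_V$ contributes a factor in $\gS^{2\lceil q/n\rceil}$ (or $\gS^{2q}$ if $n\le 1$) and similarly for $m$, and $e^{-it\Delta}U e^{it\Delta}$ is handled by the dual Strichartz estimate \eqref{eq:Strichartz}; composing, the product sits in $\gS^r$ for some $r$ governed by $1/r = (\text{contributions})$. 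The hypothesis $n+m+1\ge 2q'$ is precisely what guarantees that the total Schatten exponent of the product (before hitting $\gamma_f$) is $\le\infty$, i.e. that enough wave-operator factors are present to push the composite into a genuine Schatten class rather than merely $\cB$; then pairing with the bounded operator $\gamma_f$ (which need not be in any Schatten class, being translation invariant) via the trace inequality $|\tr(XY)|\le \|X\|_{\gS^r}\|Y\|_{\gS^{r'}}$ — here with $Y=\gamma_f$ we instead use $|\tr(X\gamma_f)|\le \|X\|_{\gS^1}\|\gamma_f\|_{\cB}$ after the exponent bookkeeping forces $r=1$, or more robustly write $\gamma_f$ as the $L^1(\check g)$-average of unitaries and use $\|X\|_{\gS^1}$ directly — closes the estimate with the constant $\|\check g\|_{L^1}$ appearing linearly.

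The main obstacle, and the point requiring genuine care rather than bookkeeping, is the Schatten-exponent arithmetic: one must check that under the constraint $n+m+1\ge 2q'$ the composition $\cW^{(m)*}_V\,(e^{-it\Delta}Ue^{it\Delta})\,\cW^{(n)}_V$ really does land in $\gS^1$ (equivalently, that $\tfrac{1}{2\lceil q/n\rceil}+\tfrac{q+1}{2q}+\tfrac{1}{2\lceil q/m\rceil}\ge 1$ after interpreting the $n=0$ or $m=0$ cases with $\cW^{(0)}=1$ contributing exponent $0$ to the sum), so that trace-pairing with $\gamma_f\in\cB$ is legitimate. The ceiling functions $\lceil q/n\rceil$ make this slightly delicate for small $n,m$, and one should isolate the boundary cases $n\in\{0,1\}$, $m\in\{0,1\}$ and verify the threshold by hand. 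Once this is settled, the rest is a routine repetition of the duality-plus-Strichartz argument of Lemma~\ref{lemma:higher-order-Q0}, with $Q_0\in\gS^{(d+2)/(d+1)}$ replaced by $\gamma_f$ handled through $\|\check g\|_{L^1}$ as above, and the factorial gains $(n!)^{-1/2q'}(m!)^{-1/2q'}$ carried over verbatim from \eqref{eq:est-Wn}.
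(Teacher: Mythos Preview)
There is a genuine gap. Your plan hinges on showing that the operator
\[
X:=\int_0^\ii \cW^{(m)}_V(t)^*e^{-it\Delta}U(t)e^{it\Delta}\cW^{(n)}_V(t)\,dt
\]
lies in $\gS^1$, so that you may pair it with $\gamma_f\in\cB$ (or with your superposition of unitary translations). But the black-box estimates you invoke do not give this. The generalization of Theorem~\ref{thm:est-Wn} used in Lemma~\ref{lemma:higher-order-Q0} only yields $X\in\gS^{2q'}$, not $\gS^1$; this is precisely why in that lemma one needs $Q_0\in\gS^{(2q')'}$. Your ``Schatten arithmetic''
$\tfrac{1}{2\lceil q'/n\rceil}+\tfrac{q+1}{2q}+\tfrac{1}{2\lceil q'/m\rceil}\ge 1$
is not valid: the three factors $\cW^{(m)}_V(t)^*$, $e^{-it\Delta}U(t)e^{it\Delta}$, $\cW^{(n)}_V(t)$ share the integration variable $t$, so $X$ is \emph{not} a product of three operators to which H\"older applies. (Also, the middle term $(q+1)/(2q)$ corresponds to $\gS^{2q/(q+1)}$, which is the Schatten class of $Q_0$ in Lemma~\ref{lemma:higher-order-Q0}, not anything associated with $U$; the dual Strichartz puts $\int e^{-it\Delta}U e^{it\Delta}dt$ in $\gS^{2q'}$, contributing only $1/(2q')$.) Writing $\gamma_f$ as an $L^1(\check g)$-average of translations $\tau_z$ does not help either, since $|\tr(X\tau_z)|\le\|X\|_{\gS^1}$ still requires $X\in\gS^1$.

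The paper's argument is structurally different and this difference is essential. One fully expands $\cW^{(n)}_V,\cW^{(m)}_V$ into the $(n+m+1)$-fold time integral of a product of operators $V(t_j,x-2it_j\nabla)$, $U(t,x-2it\nabla)$, and $\gamma_f$. Using the square-root trick one groups these into $n+m+1$ adjacent pairs, each of which is put into $\gS^{2q'}$ via the Kato--Seiler--Simon type bound~\eqref{eq:gKSS}, \emph{except} for the single pair that straddles $\gamma_f$, namely $V(t_1)^{1/2}\gamma_f V(s_1)^{1/2}$. For that pair a new lemma (Lemma~\ref{lemma:KSS-gamma} in the paper) gives the same $\gS^{2q'}$ bound with an extra factor $\|\check g\|_{L^1}$ and, crucially, the same dispersive weight $|t_1-s_1|^{-d/(2q')}$. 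H\"older over the $n+m+1$ pairs then gives trace class precisely when $n+m+1\ge 2q'$, and the multilinear time integral is closed as in~\cite{FraLewLieSei-13}. The point you are missing is that $\gamma_f$ must be \emph{sandwiched} between two potential factors to participate in the dispersive smoothing; treating it as a standalone bounded operator throws away exactly the gain that makes the threshold $n+m+1\ge 2q'$ appear.
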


\begin{proof}
 We again argue by duality. Let $U\in L^{p'}_tL^{q'}_x$. Without loss of generality, we can assume that $U,V\ge0$. Then, we evaluate
\begin{multline*}
 \int_0^\ii\tr\left[U(t,x)e^{it\Delta}\cW^{(n)}_V(t)\gamma_f\cW^{(m)}_V(t)^*e^{-it\Delta}\right]\,dt\\
 =(-i)^ni^m\int_0^\ii\,dt\int_{0\le s_1\le\cdots\le s_m\le t}\,ds_1\cdots ds_m\int_{0\le t_1\le\cdots\le t_n\le t}\,dt_1\cdots dt_n\times\\
 \times\tr\left[V(s_1,x-2is_1\nabla)\cdots V(s_m,x-2is_m\nabla)U(t,x-2it\nabla)\times\right.\\
 \left.\times V(t_n,x-2it_n\nabla)\cdots V(t_1,x-2it_1\nabla)\gamma_f\right],
\end{multline*}
 where we used the relation
 $$e^{-it\Delta}W(t,x)e^{it\Delta}=W(t,x-2it\nabla).$$
In the spirit of \cite{FraLewLieSei-13}, we gather the terms using the cyclicity of the trace as 
\begin{multline}\label{eq:bigtrace}
\tr\left[V(s_1,x-2is_1\nabla)\cdots V(s_m,x-2is_m\nabla)U(t,x-2it\nabla)\times\right.\\
 \left.\times V(t_n,x-2it_n\nabla)\cdots V(t_1,x-2it_1\nabla)\gamma_f\right]\\
=\tr\left[V(s_1,x-2is_1\nabla)^{\frac 12}V(s_2,x-2is_2\nabla)^{\frac 12}\cdots\right.\times\\
V(s_m,x-2is_m\nabla)^{\frac 12}U(t,x-2it\nabla)^{\frac 12}U(t,x-2it\nabla)^{\frac 12}V(t_n,x-2it_n\nabla)^{\frac 12}\times\\
\left.\times\cdots V(t_1,x-2it_1\nabla)^{\frac 12}\gamma_fV(s_1,x-2is_1\nabla)^{\frac 12}\right].	 
\end{multline}
The first ingredient to estimate this trace is \cite[Lemma 1]{FraLewLieSei-13}, which states that
\begin{equation}
\norm{\phi_1(\alpha x-i\beta\nabla)\phi_2(\gamma x-i\delta\nabla)}_{\gS^r}\le\frac{\norm{\phi_1}_{L^r(\R^d)}\norm{\phi_2}_{L^r(\R^d)}}{(2\pi)^{\frac dr}|\alpha\delta-\beta\gamma|^{\frac dr}},\quad\forall r\ge2. 
\label{eq:gKSS}
\end{equation}
The second ingredient, to treat the term with $\gamma_f$, is a generalization of this inequality involving $\gamma_f$.

\begin{lemma}\label{lemma:KSS-gamma}
 There exists a constant $C>0$ such that for all $t,s\in\R$ we have 
\begin{equation}
\norm{\phi_1(x+2it\nabla)g(-i\nabla) \phi_2(x+2is\nabla)}_{\gS^r}\le \frac{\|\check{g}\|_{L^1(\R^d)}}{(2\pi)^{\frac{d}{2}}}\frac{\norm{\phi_1}_{L^r(\R^d)}\norm{\phi_2}_{L^r(\R^d)}}{(2\pi)^{\frac{d}{r}}|t-s|^{\frac dr}}
\label{eq:estim_gamma_f} 
\end{equation}
for all $r\geq2$.
\end{lemma}

We remark that~\eqref{eq:estim_gamma_f} reduces to~\eqref{eq:gKSS} when $g=1$ and $\check{g}=(2\pi)^{d/2}\delta_0$.
We postpone the proof of this lemma, and use it to estimate \eqref{eq:bigtrace} in the following way:
\begin{multline*}
 \left|\tr\left[V(s_1,x-2is_1\nabla)\cdots V(s_m,x-2is_m\nabla)U(t,x-2it\nabla)\times\right.\right.\\
 \left.\left.\times V(t_n,x-2it_n\nabla)\cdots V(t_1,x-2it_1\nabla)\gamma_f\right]\right|\\
 \le C\norm{\check{g}}_{L^1}\frac{\norm{V(s_1)}_{L^{q'}}\cdots\norm{V(s_m)}_{L^{q'}}\norm{U(t)}_{L^{q'}}\norm{V(t_n)}_{L^{q'}}\norm{V(t_1)}_{L^{q'}}}{|s_1-t_1|^{\frac{d}{2q'}}\cdots|s_m-t|^{\frac{d}{2q'}}|t-t_n|^{\frac{d}{2q'}}\cdots|t_2-t_1|^{\frac{d}{2q'}}}.
\end{multline*}
Here, we have used the condition $n+m+1\ge2q'$ to ensure that the operator inside the trace is trace-class by Hölder's inequality. From this point the proof is identical to the proof of \cite[Thm. 3]{FraLewLieSei-13}.
\end{proof}

\begin{proof}[Proof of Lemma \ref{lemma:KSS-gamma}]
 The inequality is immediate if $r=\ii$. Hence, by complex interpolation, we only have to prove it for $r=2$. We have 
 \begin{align*}
  \|\phi_1(t,x+2it\nabla) & g(-i\nabla) \phi_2(s,x+2is\nabla)\|_{\gS^2}^2\\
  &= \tr\left[\phi_1(x)^2e^{i(t-s)\Delta}g(-i\nabla)\phi_2(x)^2e^{i(s-t)\Delta}g(-i\nabla)\right]\\
  &= \frac{(2\pi)^{-2d}}{|t-s|^d}\iint \phi_1(x)^2\left|\left(\check{g}*e^{-i\frac{|\cdot|^2}{4(t-s)}}\right)(x-y)\right|^2\phi_2(y)^2\,dx\,dy\\
  &\le \frac{(2\pi)^{-2d}}{|t-s|^d}\norm{\check{g}}_{L^1}^2\norm{\phi_1}_{L^2}^2\norm{\phi_2}_{L^2}^2.
 \end{align*}
\end{proof}

In dimension $d$, we want to prove that $\rho_Q$ belongs to $L^{1+2/d}_{t,x}$, hence we consider $q=1+2/d$ and $q'=1+d/2$. The previous result estimates the terms of order $n+m+1\ge d+2$, that is $n+m\ge d+1$. The case $n+m=1$ corresponds exactly to the linear response studied in the previous section. In dimension $d=2$, we see that we are still lacking the case $n+m=2$, which is what we call the second order. The next section is devoted to this order. We are not able to treat the terms with $1<n+m\le d$ in other dimensions.

\section{Second order in 2D}\label{sec:second-order}

The study of the linear response is not enough to prove dispersion for the Hartree equation in 2D. We also have to estimate the second order term, that we first compute explicitly in any dimension, and then study only in dimension 2.

\subsection{Exact computation in any dimension}
 
 Define the second order term in the Duhamel expansion of $Q(t)$,
 \begin{multline*}
    Q_2(t):=\\
    (-i)^2\int_0^t\d{s}\int_0^s\d{t_1}e^{i(t-s)\Delta}[V(s),e^{i(s-t_1)\Delta}[V(t_1),\gamma_f]e^{i(t_1-s)\Delta}]e^{i(s-t)\Delta},
 \end{multline*}
 where we again used the notation $V=w*\rho_Q$. We compute explicitly its density. To do so, we let $W\in \cD(\R_+\times\R^d)$ and use the relation
 $$\int_0^\ii\int_{\R^d}W(t,x)\rho_{Q_2}(t,x)\,dx\,dt=\int_0^\ii\tr[W(t)Q_2(t)]\,dt.$$
  For any $(p,q)\in\R^d\times\R^d$ we have
 \begin{multline*}
  \hat{Q_2}(t,p,q)=-\frac{1}{(2\pi)^d}\int_0^tds\int_0^sdt_1\int_{\R^d}dq_1\,e^{i(t-s)(q^2-p^2)}\times\\
  \times\left[\hat{V}(s,p-q_1)e^{i(s-t_1)(q^2-q_1^2)}\hat{V}(t_1,q_1-q)(g(q)-g(q_1))\right.\\
  \left.-\hat{V}(s,q_1-q)e^{i(s-t_1)(q_1^2-p^2)}\hat{V}(t_1,p-q_1)(g(q_1)-g(p))\right].
 \end{multline*}
 Using that
 $$\tr[W(t)Q_2(t)]=\frac{1}{(2\pi)^{d/2}}\int_{\R^d\times\R^d}\hat{W}(t,q-p)\hat{Q_2}(t,p,q)\,dp\,dq,$$
 we arrive at the formula
 \begin{multline*}
    \int_0^\ii\int_{\R^d}W(t,x)\rho_{Q_2}(t,x)\,dx\,dt = \int_0^\ii\int_0^\ii\int_0^\ii\int_{\R^d\times\R^d}\,dt\,ds\,dt_1\,dk\,d\ell\times\\
    \times K^{(2)}(t-s,s-t_1;k,\ell)\hat{W}(t,-k)\hat{\rho_Q}(s,k-\ell)\hat{\rho_Q}(t_1,\ell),
 \end{multline*}
 with
  \begin{multline*}
    K^{(2)}(t,s;k,\ell)\\
    =\1_{t\ge0}\1_{s\ge0}\frac{4\hat{w}(\ell)\hat{w}(k-\ell)}{(2\pi)^{d/2}}\sin(tk\cdot(k-\ell))\sin(\ell\cdot(tk+s\ell))\check{g}(2(tk+s\ell)).
  \end{multline*}

\subsection{Estimates in 2D}

\begin{proposition}\label{prop:second-order}
 Assume that $g\in L^1(\R^2)$ is such that $|x|^{a}|\check{g}(x)|\in L^\ii(\R^2)$ for some $a>3$. Assume also that $w$ is such that $(1+|k|^{1/2})|\hat{w}(k)|\in L^\ii(\R^2)$. Then, if $\rho_Q\in L^2_{t,x}(\R\times\R^2)$ we have
 \begin{equation}
  \|\rho_{Q_2}\|_{L^2_{t,x}(\R\times\R^2)}\le C\norm{(1+|\cdot|^2)^{a/2}\check{g}}_{L^\ii}\norm{(1+|\cdot|^{1/2})\hat{w}}_{L^\ii}\|\rho_Q\|_{L^2_{t,x}(\R\times\R^2)}^2,
 \end{equation}
 for some constant $C(g,w)$ only depending on $g$ and $w$. 
\end{proposition}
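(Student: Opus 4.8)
The plan is to estimate $\rho_{Q_2}$ by duality, insert the explicit kernel $K^{(2)}$ computed above, and reduce the bound to a bilinear Fourier-multiplier estimate; the main work is the control of the two time integrals in the kernel, especially where the spatial frequencies are small or (anti)parallel.

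First I would write $\norm{\rho_{Q_2}}_{L^2_{t,x}}=\sup\big\{\big|\int_{\R\times\R^2}W\,\rho_{Q_2}\big|:\norm{W}_{L^2_{t,x}}\le1\big\}$ and use the formula for $\rho_{Q_2}$ from the previous subsection, so that everything reduces to bounding the five-fold integral in $(t,s,t_1,k,\ell)$ over $\{0\le t_1\le s\le t\}$ by $C\,\norm{W}_{L^2_{t,x}}\norm{\rho_Q}_{L^2_{t,x}}^2$. Passing to Fourier variables (Plancherel in $t$ and $x$) and substituting $\tau_1=t-s$, $\tau_2=s-t_1$ recasts this, after taking the Fourier transform in the two time variables, as a bilinear Fourier-multiplier estimate: $\hat{\rho_{Q_2}}$ is obtained from $\hat{\rho_Q}\otimes\hat{\rho_Q}$ by a bilinear multiplier whose symbol is essentially the double-time Fourier transform $\widehat{K^{(2)}}(\omega_1,\omega_2;k,\ell)$ of $K^{(2)}(\tau_1,\tau_2;k,\ell)$. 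I would also record the pointwise consequences of the hypotheses: $|\hat w(\xi)|\le C(1+|\xi|)^{-1/2}$, and, since $a>3$ in dimension $2$, the weighted bound on $\check{g}$ gives $|\check{g}(y)|\le C(1+|y|)^{-a}$ together with $\check{g}\in L^1(\R^2)$ and $\int_{\R^2}(1+|x|)|\check{g}(x)|\,dx<\ii$, all of these being controlled by $\norm{(1+|\cdot|^2)^{a/2}\check{g}}_{L^\ii}$.

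The crux is to bound this symbol, i.e.\ the double-time integral of $\sin(\tau_1k\cdot(k-\ell))\sin(\ell\cdot(\tau_1k+\tau_2\ell))\check{g}(2(\tau_1k+\tau_2\ell))$ against $e^{-i\omega_1\tau_1-i\omega_2\tau_2}$ over $\{\tau_1,\tau_2\ge0\}$. Here one cannot merely take absolute values: near the resonant set where $k$ and $\ell$ are parallel or antiparallel, the linear change of variables $(\tau_1,\tau_2)\mapsto\eta:=\tau_1k+\tau_2\ell$ degenerates and $\iint|\check{g}(2\eta)|\,d\tau_1 d\tau_2$ is of size $1/|k\wedge\ell|$ (and even divergent in the exactly antiparallel case), which is not locally integrable in $\ell$. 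The oscillation carried by $\sin(\tau_1k\cdot(k-\ell))$ along the degenerate direction must be exploited. Concretely one writes the product of sines as a combination of cosines and computes the partial time Fourier transforms in closed form, exactly as in Section~\ref{sec:linear-response} (the half-line integrals producing, besides the integral of $\check{g}$ along lines, a principal-value correction); the resulting symbol is $C|\hat w(\ell)||\hat w(k-\ell)|$ times a bounded factor with controlled decay in $(\omega_1,\omega_2)$ and in $(k,\ell)$, the constant being $\le\norm{(1+|\cdot|^2)^{a/2}\check{g}}_{L^\ii}$, and the delicate point is the behaviour near $k\parallel\pm\ell$ and near $k=0$ or $\ell=0$ --- the analogue of the non-continuity of $m_f$ at the origin.

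Finally, with such pointwise bounds in hand, the required bilinear estimate follows by the standard Cauchy--Schwarz and Plancherel argument for bilinear Fourier multipliers: it suffices to bound, uniformly in the two output frequencies $(\omega,k)$, the $L^2$-norm of the symbol $\widehat{K^{(2)}}$ in the two remaining integrated frequencies, and likewise after exchanging the two input slots. Checking this is where the gain $|\hat w(\xi)|\lesssim(1+|\xi|)^{-1/2}$ and the extra moment of $\check{g}$ (available because $a>3$) are combined: the singularities of the symbol at $\ell=0$, $\ell=k$ and on the rays $\ell=\pm\lambda k$ are integrable against $|\hat w(\ell)|^2|\hat w(k-\ell)|^2$, and the $(\omega_1,\omega_2)$-integration converges thanks to the decay of the symbol in those variables; the resulting constant is $C\,\norm{(1+|\cdot|^2)^{a/2}\check{g}}_{L^\ii}\,\norm{(1+|\cdot|^{1/2})\hat w}_{L^\ii}$, as claimed. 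I expect the main obstacle to be precisely the symbol bound near the resonant set: absolute values in the time integral are fatally lossy there and one must genuinely extract the cancellation from the sine factors; once that is done, the remaining regions and the $\ell$-integration are a lengthy but routine bookkeeping.
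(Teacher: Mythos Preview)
Your proposal diverges substantially from the paper's argument, and rests on a diagnosis that is not actually correct. You assert that taking absolute values in the time integrals is ``fatally lossy'' near the set $\{k\parallel\ell\}$ and that one must therefore extract cancellation from the sine factors via explicit time-Fourier computations. The paper shows this is unnecessary: it \emph{does} take absolute values, but in an asymmetric mixed norm rather than $L^1_{\tau_1,\tau_2}$, and then handles the remaining singularity by a Hardy--Littlewood--Sobolev--type lemma in the spatial frequencies.

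Concretely, the paper bounds the time integrals via the elementary inequality
\[
\left|\int_{\R^3}G(t_1-t_2,t_2-t_3)f_1(t_1)f_2(t_2)f_3(t_3)\,dt_1\,dt_2\,dt_3\right|\le C\|G\|_{L^2_{\tau_1}L^1_{\tau_2}}\prod_{i}\|f_i\|_{L^2},
\]
so only $\|K^{(2)}(\cdot,\cdot;k,\ell)\|_{L^2_{\tau_1}L^1_{\tau_2}}$ is needed. For this one uses $|\sin(\tau_1 k\cdot(k-\ell))|\le1$ but $|\sin(\ell\cdot(\tau_1 k+\tau_2\ell))|\le|\ell|\,|\tau_1 k+\tau_2\ell|$; after the linear change $(\tau_1,\tau_2)\mapsto(u,v)$ aligning with $(k,\ell)$, the inner $L^1$ and outer $L^2$ contribute differently and one finds
\[
\|K^{(2)}(\cdot,\cdot;k,\ell)\|_{L^2_{\tau_1}L^1_{\tau_2}}\le C\,|\hat w(\ell)|\,|\hat w(k-\ell)|\,\frac{|\ell|^{1/2}}{|\det(k,\ell)|^{1/2}},
\]
the constant being controlled by $\norm{(1+|\cdot|^2)^{a/2}\check g}_{L^\ii}$ precisely when $a>3$. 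The singularity is thus $|\det(k,\ell)|^{-1/2}$, not $|\det(k,\ell)|^{-1}$, and the excess $|\ell|^{1/2}$ is absorbed by the hypothesis $(1+|\cdot|^{1/2})\hat w\in L^\ii$. The remaining step, and the genuinely new ingredient you are missing, is the lemma
\[
\left|\int_{\R^2\times\R^2}\frac{f(k)\,g(k-\ell)\,h(\ell)}{|\det(k,\ell)|^{1/2}}\,dk\,d\ell\right|\le C\,\|f\|_{L^2}\|g\|_{L^2}\|h\|_{L^2},
\]
proved by a threefold H\"older splitting followed by one-dimensional Hardy--Littlewood--Sobolev in each coordinate. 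This replaces entirely your proposed symbol analysis near the resonant set.

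In short: your plan to pass to the full time-Fourier symbol and exploit oscillation is a harder road whose crucial step you leave unverified, whereas the paper avoids oscillation altogether by (i) the asymmetric $L^2_{\tau_1}L^1_{\tau_2}$ estimate on the kernel and (ii) the determinant HLS lemma above. Both of these ideas are absent from your outline.
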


\begin{proof}
First, we have the estimate
\[
 \left|\int_{\R^3}G(t_1-t_2,t_2-t_3)f_1(t_1)f_2(t_2)f_3(t_3)\,dt_1\,dt_2\,dt_3\right|\le C\|G\|_{L^2L^1}\prod_{i=1}^3\|f_i\|_{L^2},
\]
for any $G$, and hence
\begin{multline*}
 \left|\int_{\R^3}\!\!K^{(2)}(t_1-t_2,t_2-t_3;k,\ell)\hat{W}(t_1,-k)\hat{\rho_Q}(t_2,k-\ell)\hat{\rho_Q}(t_3,\ell)\,dt_1\,dt_2\,dt_3\right|\\
\le \|K^{(2)}(t,s;k,\ell)\|_{L^2_tL^1_s}\|\hat{W}(\cdot,-k)\|_{L^2}\|\hat{\rho_Q}(\cdot,k-\ell)\|_{L^2}\|\hat{\rho_Q}(\cdot,\ell)\|_{L^2}.
\end{multline*}
Let us thus estimate $\|K^{(2)}(t,s;k,\ell)\|_{L^2_tL^1_s}$. To do so, we use $|\sin(tk\cdot(k-\ell))|\le1$, $|\sin(\ell\cdot(tk+s\ell))|\le|\ell||tk+s\ell|$ and get
\begin{multline*}
\|K^{(2)}(t,s;k,\ell)\|_{L^2_tL^1_s}^2\\ \leq \frac{16\hat{w}(\ell)^2\hat{w}(k-\ell)^2}{(2\pi)^{d}}\ell^2\int_{\R}\,dt\left|\int_{\R}\,ds|tk+s\ell||\check{g}(2(tk+s\ell))|\right|^2. 
\end{multline*}
We let 
$$u=\ell s+t\frac{k\cdot\ell}{\ell}\quad \text{and}\quad  v=\sqrt{k^2-\frac{(k\cdot\ell)^2}{\ell^2}}t$$
and notice that 
\[
 |tk+s\ell|=\left(\ell^2 \left(s+t\frac{k\cdot\ell}{\ell^2}\right)^2 +\left(k^2-\frac{(k\cdot\ell)^2}{\ell^2}\right)t^2\right)^{1/2}=\sqrt{u^2+v^2}.
\]
Since $\check{g}$ is a radial function we find that
\begin{multline*}
\ell^2\int_{\R}\,dt\left|\int_{\R}\,ds|tk+s\ell||\check{g}(2(tk+s\ell))|\right|^2\\
=\frac{|\ell|}{\left(k^2\ell^2-(k\cdot\ell)^2\right)^{1/2}}\int_{\R}\,dv\left|\int_{\R}\,du\sqrt{u^2+v^2}|\check{g}(2\sqrt{u^2+v^2})|\right|^2.
\end{multline*}
The double integral on the right is finite under some mild decay assumptions on $\check{g}$, for instance it is finite if $|\check{g}(r)|\leq C(1+r^2)^{-a/2}$, for some $a>3$. Noticing that $\left(k^2\ell^2-(k\cdot\ell)^2\right)^{1/2}=|\det(k,\ell)|$, we thus have
\begin{multline*}
 |\langle W,\rho_{Q_2}\rangle|\le C\norm{(1+|\cdot|^2)^{a/2}\check{g}}_{L^\ii}\int_{\R^{2d}}\,dk\,d\ell\times\\
 \times\frac{\|\hat{W}(\cdot,-k)\|_{L^2}|\hat{w}(k-\ell)|\|\hat{\rho_Q}(\cdot,k-\ell)\|_{L^2}|\ell|^{1/2}|\hat{w}(\ell)|\|\hat{\rho_Q}(\cdot,\ell)\|_{L^2}}{|\det(k,\ell)|^{1/2}}.
\end{multline*}
 We prove the following inequality of Hardy-Littlewood-Sobolev type:

\begin{lemma}\label{lemma:det}
 For any functions $f,g,h$ we have 
 \begin{equation}
  \left|\int_{\R^2\times\R^2}\frac{f(k)g(k-\ell)h(\ell)}{|\det(k,\ell)|^{1/2}}\d{k}\d{\ell}\right|\le C\|f\|_{L^2}\|g\|_{L^2}\|h\|_{L^2}.
 \end{equation}
\end{lemma}

 \begin{proof}
  Since $\det(k,\ell)=k_1\ell_2-k_2\ell_1$, we first fix $k_1\neq0$, $\ell_1\neq0$, $k_1\neq\ell_1$ and estimate
  \begin{multline*}
   \left|\int_{\R^2}\frac{f(k_1,k_2)g(k_1-\ell_1,k_2-\ell_2)h(\ell_1,\ell_2)}{|k_1\ell_2-k_2\ell_1|^{1/2}}\,dk_2\,d\ell_2\right| \\
    \le\left(\int_{\R^2}\frac{|f(k_1,k_2)|^{3/2}|g(k_1-\ell_1,k_2-\ell_2)|^{3/2}}{|k_1\ell_2-k_2\ell_1|^{1/2}}\,dk_2\,d\ell_2\right)^{1/3}\times\\
    \times\left(\int_{\R^2}\frac{|f(k_1,k_2)|^{3/2}|h(\ell_1,\ell_2)|^{3/2}}{|k_1\ell_2-k_2\ell_1|^{1/2}}\,dk_2\,d\ell_2\right)^{1/3}\times\\
    \times\left(\int_{\R^2}\frac{|g(k_1-\ell_1,k_2-\ell_2)|^{3/2}|h(\ell_1,\ell_2)|^{3/2}}{|k_1\ell_2-k_2\ell_1|^{1/2}}\,dk_2\,d\ell_2\right)^{1/3}.
  \end{multline*}
 We then have
 \begin{align*}
  \int_{\R^2}&\frac{|f(k_1,k_2)|^{3/2}|g(k_1-\ell_1,k_2-\ell_2)|^{3/2}}{|k_1\ell_2-k_2\ell_1|^{1/2}}\,dk_2\,d\ell_2 \\
    &= \int_{\R^2}\frac{|f(k_1,k_2)|^{3/2}|g(k_1-\ell_1,\ell_2)|^{3/2}}{|k_2(k_1-\ell_1)-\ell_2k_1|^{1/2}}\,dk_2\,d\ell_2\\
    &= \frac{1}{|k_1||k_1-\ell_1|}\int_{\R^2}\frac{|f(k_1,k_2/(k_1-\ell_1))|^{3/2}|g(k_1-\ell_1,\ell_2/k_1)|^{3/2}}{|k_2-\ell_2|^{1/2}}\,dk_2\,d\ell_2\\
    &\le \frac{C}{|k_1||k_1-\ell_1|}\|f(k_1,\cdot/(k_1-\ell_1))\|_{L^2}^{3/2}\|g(k_1-\ell_1,\cdot/k_1)\|_{L^2}^{3/2}\\
    &\le\frac{C}{|k_1|^{1/4}|k_1-\ell_1|^{1/4}}\|f(k_1,\cdot)\|_{L^2}^{3/2}\|g(k_1-\ell_1,\cdot)\|_{L^2}^{3/2},
 \end{align*}
   and in the same fashion
 \[
  \int_{\R^2}\!\!\frac{|f(k_1,k_2)|^{3/2}|h(\ell_1,\ell_2)|^{3/2}}{|k_1\ell_2-k_2\ell_1|^{1/2}}\,dk_2\,d\ell_2\le\frac{C}{|k_1|^{\frac 14}|\ell_1|^{\frac 14}}\|f(k_1,\cdot)\|_{L^2}^{3/2}\|h(\ell_1,\cdot)\|_{L^2}^{3/2},
 \]
 \begin{multline*}
  \int_{\R^2}\frac{|g(k_1-\ell_1,k_2-\ell_2)|^{3/2}|h(\ell_1,\ell_2)|^{3/2}}{|k_1\ell_2-k_2\ell_1|^{1/2}}\,dk_2\,d\ell_2\\
  \le \frac{C}{|\ell_1|^{1/4}|k_1-\ell_1|^{1/4}}\|g(k_1-\ell_1,\cdot)\|_{L^2}^{3/2}\|h(\ell_1,\cdot)\|_{L^2}^{3/2}.
 \end{multline*}
  As a consequence, we have
  \begin{multline*}
   \left|\int_{\R^2}\frac{f(k_1,k_2)g(k_1-\ell_1,k_2-\ell_2)h(\ell_1,\ell_2)}{|k_1\ell_2-k_2\ell_1|^{1/2}}\,dk_2\,d\ell_2\right|\\
    \le C\frac{\|f(k_1,\cdot)\|_{L^2}\|g(k_1-\ell_1,\cdot)\|_{L^2}\|h(\ell_1,\cdot)\|_{L^2}}{|k_1|^{1/6}|\ell_1|^{1/6}|k_1-\ell_1|^{1/6}}.
  \end{multline*}
We now need a multilinear Hardy-Littlewood-Sobolev-type inequality. Integrating over $(k_1,\ell_1)$ we find that
  \begin{align*}
   \left|\int_{\R^2\times\R^2}\right. & \left.\frac{f(k)g(k-\ell)h(\ell)}{|\det(k,\ell)|^{1/2}}\,dk\,d\ell \right| \\
   &\le C\int_{\R^2}\frac{\|f(k_1,\cdot)\|_{L^2}\|g(k_1-\ell_1,\cdot)\|_{L^2}\|h(\ell_1,\cdot)\|_{L^2}}{|k_1|^{1/6}|\ell_1|^{1/6}|k_1-\ell_1|^{1/6}}\,dk_1\,d\ell_1\\
    &\le C\left(\frac{\|g(k_1-\ell_1,\cdot)\|_{L^2}^{3/2}\|h(\ell_1,\cdot)\|_{L^2}^{3/2}}{|k_1|^{1/2}}\,dk_1\,d\ell_1\right)^{1/3}\times\\
    &\quad\times\left(\frac{\|f(k_1,\cdot)\|_{L^2}^{3/2}\|g(k_1-\ell_1,\cdot)\|_{L^2}^{3/2}}{|\ell_1|^{1/2}}\,dk_1\,d\ell_1\right)^{1/3}\times\\
    &\quad\times\left(\frac{\|f(k_1,\cdot)\|_{L^2}^{3/2}\|h(\ell_1,\cdot)\|_{L^2}^{3/2}}{|k_1-\ell_1|^{1/2}}\,dk_1\,d\ell_1\right)^{1/3}\\
    &\le C\|f\|_{L^2}\|g\|_{L^2}\|h\|_{L^2}
  \end{align*}
where in the last line we have used the 2D Hardy-Littlewood-Sobolev inequality.
 \end{proof}

 From the lemma, we deduce that
 \[
 |\langle W,\rho_{Q_2}\rangle|\le C\norm{(1+|\cdot|^2)^{a/2}\check{g}}_{L^\ii}\norm{(1+|\cdot|^{1/2})\hat{w}}_{L^\ii}\|\rho_Q\|_{L^2_{t,x}}^2,
\]
which ends the proof of the proposition.
\end{proof}
 
\section{Proof of the main theorem}\label{sec:proof-thm}

\begin{proof}[Proof of Theorem \ref{thm:main}]
Let $T>0$. Assume also that $\norm{Q_0}_{\gS^{4/3}}\le1$. We solve the equation
\begin{align*}
\rho_Q(t)&=\rho\left[e^{it\Delta}\cW_{w*\rho_Q}(t)(\gamma_f+Q_0)\cW_{w*\rho_Q}(t)^*e^{-it\Delta}\right]-\rho_{\gamma_f}\\
&=\rho\left[e^{it\Delta}Q_0e^{-it\Delta}\right]-\cL(\rho_Q)+\cR(\rho_Q)
\end{align*}
by a fixed-point argument. Here $\cL=\cL_1+\cL_2$ where $\cL_1$ was studied in Section~\ref{sec:linear-response} and
$$\cL_2(\rho_Q)=-\rho\left[e^{it\Delta}(\cW^{(1)}_{w*\rho_Q}(t)Q_0+Q_0\cW^{(1)}_{w*\rho_Q}(t)^*)e^{-it\Delta}\right].$$
As explained in Proposition~\ref{prop:linear-response} and in Corollary~\ref{cor:linear-response_defocusing}, under the assumption~\eqref{eq:condition_linear_response} (or~\eqref{eq:condition_negative_part} when $f$ is strictly decreasing), $(1+\cL_1)$ is invertible with bounded inverse on $L^2_{t,x}$. The operator $1+\cL=1+\cL_1+\cL_2$ is invertible with bounded inverse when
$$\norm{\cL_2}<\frac{1}{\norm{(1+\cL_1)^{-1}}}.$$
By Lemma~\ref{lemma:higher-order-Q0}, we have 
$$\norm{\cL_2}\le C\norm{w}_{L^1}\norm{Q_0}_{\gS^{4/3}}$$
and therefore a sufficient condition can be expressed as
$$\norm{Q_0}_{\gS^{4/3}}<\frac{1}{C\norm{w}_{L^1}\norm{(1+\cL_1)^{-1}}}.$$
Then we can write
$$\rho_Q(t)=(1+{\cL})^{-1}\left(\rho\left[e^{it\Delta}Q_0e^{-it\Delta}\right]+\cR(\rho_Q)\right).$$
For any $\phi\in L^2_{t,x}([0,T]\times\R^2)$, define 
$$F(\phi)(t)=\rho\left[e^{it\Delta}Q_0e^{-it\Delta}\right]+\cR(\phi).$$
We apply the Banach fixed-point theorem on the map $(1+\cL)^{-1}F$. To do so, we expand $F$ as
\begin{multline*}
 F(\phi)(t)=\rho\left[e^{it\Delta}Q_0e^{-it\Delta}\right]+\sum_{n+m\ge2}\rho\left[e^{it\Delta}\cW_{w*\phi}(t)Q_0\cW_{w*\phi}(t)^*e^{-it\Delta}\right]\\
 +\sum_{n+m=2}\rho\left[e^{it\Delta}\cW_{w*\phi}^{(n)}(t)\gamma_f\cW_{w*\phi}^{(m)}(t)^*e^{-it\Delta}\right]\\
 +\sum_{n+m\ge3}\rho\left[e^{it\Delta}\cW_{w*\phi}^{(n)}(t)\gamma_f\cW_{w*\phi}^{(m)}(t)^*e^{-it\Delta}\right].
\end{multline*}
By the Strichartz estimate \eqref{eq:Strichartz}, we have 
$$\norm{\rho\left[e^{it\Delta}Q_0e^{-it\Delta}\right]}_{L^2_{t,x}}\le C\norm{Q_0}_{\gS^{4/3}}.$$
By Lemma \ref{lemma:higher-order-Q0}, we have
\begin{multline*}
  \norm{\sum_{n+m\ge2}\rho\left[e^{it\Delta}\cW_{w*\phi}(t)Q_0\cW_{w*\phi}(t)^*e^{-it\Delta}\right]}_{L^2_{t,x}}\\
  \le C\norm{Q_0}_{\gS^{4/3}}\sum_{n+m\ge2}\frac{C^{n+m}\norm{w*\phi}_{L^2_{t,x}}^{n+m}}{(n!)^{\frac 14}(m!)^{\frac 14}}.
\end{multline*}
By Proposition \ref{prop:second-order}, we have
\begin{multline*}
\norm{\sum_{n+m=2}\rho\left[e^{it\Delta}\cW_{w*\phi}^{(n)}(t)\gamma_f\cW_{w*\phi}^{(m)}(t)^*e^{-it\Delta}\right]}_{L^2_{t,x}}\\
\le C\norm{(1+|\cdot|^2)^{a/2})\check{g}}_{L^\ii}\norm{(1+|\cdot|^{1/2})\hat{w}}_{L^\ii}\norm{\phi}_{L^2_{t,x}}^2.
\end{multline*}
Finally, by Lemma \ref{lemma:higher-order-gamma} we have
\begin{multline*}
 \norm{\sum_{n+m\ge3}\rho\left[e^{it\Delta}\cW_{w*\phi}^{(n)}(t)\gamma_f\cW_{w*\phi}^{(m)}(t)^*e^{-it\Delta}\right]}_{L^2_{t,x}}\\
 \le C\|\check{g}\|_{L^1}\sum_{n+m\ge3}\frac{C^{n+m}\norm{w*\phi}_{L^2_{t,x}}^{n+m}}{(n!)^{\frac{1}{4}}(m!)^{\frac{1}{4}}}.
\end{multline*}
We deduce that for all $\phi\in L^2_{t,x}([0,T]\times\R^2)$, we have the estimate
$$\norm{(1+\cL)^{-1}F(\phi)}_{L^2_{t,x}}\le C\norm{(1+\cL)^{-1}}\left(\norm{Q_0}_{\gS^{4/3}}+A(\norm{\phi}_{L^2_{t,x}})\right),$$
where we used the notation
\begin{multline*}
  A(z)=C\sum_{n+m\ge2}\frac{C^{n+m}\left(\norm{w}_{L^1}z\right)^{n+m}}{(n!)^{\frac 14}(m!)^{\frac 14}}\\
  +C\norm{(1+|\cdot|^2)^{a/2})\check{g}}_{L^\ii}\norm{(1+|\cdot|^{1/2})\hat{w}}_{L^\ii}z^2\\
  +C\|\check{g}\|_{L^1}\sum_{n+m\ge3}\frac{C^{n+m}\left(\norm{w}_{L^1}z\right)^{n+m}}{(n!)^{\frac{1}{4}}(m!)^{\frac{1}{4}}}.
\end{multline*}
We have $A(z)=O(z^2)$ as $z\to0$. As a consequence, there exists $C_0,z_0>0$ only depending on $\norm{w}_{L^1}$, $\norm{(1+|\cdot|^2)^{a/2})\check{g}}_{L^\ii}\norm{(1+|\cdot|^{1/2})\hat{w}}_{L^\ii}$, and $\|\check{g}\|_{L^1}$ such that 
$$|A(z)|\le C_0z^2,$$
for all $|z|\le z_0$. Choosing 
$$R=\min\left(z_0,\frac{1}{2C_0\norm{(1+\cL)^{-1}}}\right)$$
and 
$$\norm{Q_0}_{\gS^{4,3}}\le\min\left(1,\frac{R}{2C\norm{(1+\cL)^{-1}}}\right),$$
leads to the estimate
$$\norm{(1+\cL)^{-1}F(\phi)}_{L^2_{t,x}}\le R,$$
for all $\norm{\phi}_{L^2_{t,x}}\le R$, independently of the maximal time $T>0$. Similar estimates show that $F$ is also a contraction on this ball, up to diminishing $R$ if necessary. The Banach fixed point theorem shows that there exists a solution for any time $T>0$, with a uniform estimate with respect to $T$. Having built this solution $\phi_0\in L^2_{t,x}(\R_+\times\R^2)$, we define the operator $\gamma$ as
$$\gamma(t)=e^{it\Delta}\cW_{w*\phi_0}(t)(\gamma_f+Q_0)\cW_{w*\phi_0}(t)^*e^{-it\Delta}.$$
We have $\phi_0=\rho_\gamma-\rho_{\gamma_f}$ by definition. Uniqueness of solutions independently of whether they belong to the ball where we performed the fixed point argument follows from the same arguments as in the proof of \cite[Thm. 5]{LewSab-13a}.

From~\cite[Thm. 3]{FraLewLieSei-13} we know that  $\cW_{w\ast\phi_0}-1\in C^0_t(\R_+,\gS^4)$ and that $\cW_{w\ast\phi_0}-1$ admits a strong limit in $\gS^4$ when $t\to\ii$, which gives that $\gamma-\gamma_f\in C^0(\R_+,\gS^{4})$ and our scattering result~\eqref{eq:scattering}. Next, we remark that 
since $w\in W^{1,1}(\R^2)\subset L^2(\R^2)$, we have $w\ast\phi_0\in L^2_t(L^\ii_x\cap L^2_x)$. From~\cite[Lemma 7]{LewSab-13a} and the fact that $g\in L^2(\R^2)$ (due to~\eqref{eq:derivees_f}), we deduce that $(\cW_{w\ast\phi_0}(t)-1)\gamma_f\in C^0(\R_+,\gS^{2})$. This now shows that $\gamma-\gamma_f\in C^0(\R_+,\gS^{2})$. Of course, we can perform the same procedure for negative times and this finishes the proof of Theorem \ref{thm:main}.
\end{proof}

\bigskip

\noindent\textbf{Acknowledgements.} This work was partially done while the authors were visiting the Centre \'Emile Borel at the Institut Henri Poincar\'e in Paris. The authors acknowledge financial support from the European Research Council under the European Community's Seventh Framework Programme (FP7/2007-2013 Grant Agreement MNIQS 258023), and from the French ministry of research (ANR-10-BLAN-0101).


\end{document}